\def\draft{0}
\documentclass[11pt]{article}

\usepackage{amsfonts}
\usepackage{amsmath}
\usepackage{amssymb}
\usepackage{amsthm}
\usepackage{mathabx}
\usepackage{thmtools, thm-restate}
\usepackage{bbm}
\usepackage{bm}
\usepackage{dsfont}
\usepackage{fullpage}
\usepackage{tikz}
\usepackage[most]{tcolorbox}
\usepackage{xcolor}
\usepackage{ytableau}


\newcommand*{\myfont}{\fontfamily{bch}\selectfont}
\DeclareTextFontCommand{\textmyfont}{\myfont}

\usepackage[linesnumbered,ruled,vlined]{algorithm2e} 
\SetKwComment{Comment}{/* }{ */}

\SetCommentSty{mycommfont}

\usepackage[hidelinks]{hyperref}
\usepackage[nameinlink]{cleveref}
\usepackage{url}            
\usepackage{footnotebackref}

\newtheorem{theorem}{Theorem}[section]

\newtheorem{lemma}[theorem]{Lemma}

\newtheorem{definition}[theorem]{Definition}

\newtheorem{remark}[theorem]{Remark}


\newcommand{\poly}{\mathop{\mathrm{poly}}}

\newcommand{\F}{\mathbb{F}}

\newcommand{\N}{\mathbb{N}}

\newcommand{\Boo}{\{0,1 \}}

\newcommand{\bigO}{\mathcal{O}}

\newcommand{\IPS}{\mathsf{IPS}}
\newcommand{\IPSLIN}{\mathsf{IPS}_{\mathsf{LIN}}}
\newcommand{\IPSLINp}{\mathsf{IPS}_{\mathsf{LIN}'}}

\newcommand{\ml}{\mathsf{ml}}
\newcommand{\NCo}{\mathsf{NC}^1}
\newcommand{\NCt}{\mathsf{NC}^2}
\newcommand{\nCtwo}{{[n] \choose 2}}

\newcommand{\paren}[1]{\left( #1 \right)}

\newcommand{\set}[1]{\left\{ #1 \right\}}






\newcommand{\x}{\mathbf{x}}
\newcommand{\y}{\mathbf{y}}
\newcommand{\z}{\mathbf{z}}
\newcommand{\ubar}{\mathbf{u}}
\newcommand{\vbar}{\mathbf{v}}
\newcommand{\wbar}{\mathbf{w}}

\definecolor{thmcolor}{RGB}{235, 235, 235}
\definecolor{citecolor}{RGB}{1, 210, 56}
\definecolor{lemmacolor}{RGB}{130, 169, 252}

\newtcolorbox{algobox}{colback=lightgray!5!white,colframe=lightgray!75!black}
\newtcolorbox{thmbox}{colback=thmcolor!5!white,colframe=black!75!black}
\newtcolorbox{lemmabox}{colback=lemmacolor!5!white,colframe=blue!75!blue}

\usepackage{setspace}
\usepackage{enumerate}
\usepackage[parfill]{parskip}
\usepackage{changepage}
\usepackage{framed}

\allowdisplaybreaks
\hypersetup{
	colorlinks,
	linkcolor={blue},
	citecolor={citecolor},
	urlcolor={blue}
}

\usepackage[
backend=biber,
style=alphabetic,
sorting=nyt,
backref=true,
maxcitenames = 8,
mincitenames = 5,
maxalphanames = 8,
minalphanames = 5,
maxnames = 10,
minnames = 5
]{biblatex}

\setlength\bibitemsep{\baselineskip}

\addbibresource{references.bib}

\ifnum\draft=1
\newcommand{\snote}[1]{{\color{teal} [Srikanth: #1]}}
\newcommand{\nnote}[1]{{\color{pink} [Nutan: #1]}}
\newcommand{\vnote}[1]{{\color{blue} [Varun: #1]}}
\newcommand{\mnote}[1]{{\color{purple} [Magnus: #1]}}
\newcommand{\anote}[1]{{\color{brown} [Amik: #1]}}
\else
\newcommand{\snote}[1]{}
\newcommand{\nnote}[1]{}
\newcommand{\vnote}[1]{}
\newcommand{\mnote}[1]{}
\newcommand{\anote}[1]{}
\fi

\def\anon{0}
\date{\today}

\begin{document} 
\title{Separation Results for Constant-Depth and Multilinear Ideal Proof Systems}

    \if\anon1{}\else{    
    \author{Amik Raj Behera\thanks{Department of Computer Science, University of Copenhagen, Denmark. Supported by Srikanth Srinivasan's start-up grant from the University of Copenhagen. Email: \texttt{ambe@di.ku.dk} } \and Magnus Rahbek Dalgaard Hansen \thanks{IT University of Copenhagen, Denmark, \texttt{Email: ramh@itu.dk} Supported by the Basic Algorithms Research Copenhagen (BARC), funded by VILLUM Foundation Grant 54451.} \and Nutan Limaye \thanks{IT University of Copenhagen, Denmark, \texttt{Email: nuli@itu.dk}. Supported by Independent Research Fund Denmark (grant agreement No. 10.46540/3103-00116B) and is also supported by the Basic Algorithms Research Copenhagen (BARC), funded by VILLUM Foundation Grant 54451.} \and
		Srikanth Srinivasan \thanks{Department of Computer Science, University of Copenhagen, Denmark. Supported by the European Research Council (ERC) under grant agreement no. 101125652 (ALBA). Email: \texttt{srsr@di.ku.dk} }}
     }\fi

	\maketitle
        \pagenumbering{arabic}

\begin{abstract}
In this work, we establish separation theorems for several subsystems of the Ideal Proof System ($\mathsf{IPS}$), an algebraic proof system introduced by Grochow and Pitassi (J. ACM, 2018). Separation theorems are well-studied in the context of classical complexity theory, Boolean circuit complexity, and algebraic complexity. 

In an important work of Forbes, Shpilka, Tzameret, and Wigderson (Theory of Computing, 2021), two proof techniques were introduced to prove lower bounds for subsystems of the $\mathsf{IPS}$, namely the \emph{functional method} and the \emph{multiples method}. We use these techniques and obtain the following results.\\ 

\begin{enumerate}
    \item \textbf{Hierarchy theorem for constant-depth $\mathsf{IPS}$.} Recently, Limaye, Srinivasan, and Tavenas (J. ACM 2025) proved a hierarchy theorem for constant-depth algebraic circuits. We adapt the result and prove a hierarchy theorem for constant-depth $\mathsf{IPS}$. We show that there is an unsatisfiable multilinear instance refutable by a depth-$\Delta$ $\mathsf{IPS}$ such that any depth-($\Delta/10)$ $\mathsf{IPS}$ refutation for it must have superpolynomial size. This result is proved by building on the \emph{multiples method}.\\
    
    \item \textbf{Separation theorems for multilinear $\mathsf{IPS}$.} In an influential work, Raz (Theory of Computing, 2006)  unconditionally separated two algebraic complexity classes, namely multilinear $\mathsf{NC}^{1}$ from multilinear $\mathsf{NC}^{2}$. 
    In this work, we prove a similar result for a well-studied fragment of multilinear-$\mathsf{IPS}$.

    \indent Specifically, we present an unsatisfiable instance such that its \emph{functional refutation}, i.e., the unique multilinear polynomial agreeing with the inverse of the polynomial over the Boolean cube, has a small multilinear-$\mathsf{NC}^{2}$ circuit. However, any multilinear-$\mathsf{NC}^{1}$ $\mathsf{IPS}$ refutation ($\mathsf{IPS}_{\mathsf{LIN}}$) for it must have superpolynomial size. 
    This result is proved by building on the \emph{functional method}.\\
\end{enumerate}

Given a polynomial $p(\mathbf{x})$, let $\textsf{Image}(p(\mathbf{x}))$ denote the set of values obtained when $p(\mathbf{x})$ is evaluated over the Boolean cube. Our crucial observation is that if the cardinality of this set is $\mathcal{O}(1)$, then the functional method and multiples method can be used to prove separation theorems for subsystems of the $\mathsf{IPS}$. We obtain such polynomial instances by lifting the hard instances arising from algebraic circuit complexity with \emph{addressing gadgets}. 
\end{abstract} 

\paragraph{Acknowledgments.} The authors would like to thank Varun Ramanathan for helpful discussions during the early stages of the project. 

\thispagestyle{empty}

\tableofcontents
\thispagestyle{empty}

        \newpage
        
\setcounter{page}{1}

\section{Introduction}
A proof system is defined by a collection of axioms together with a set of inference rules that determine how new statements can be derived from existing ones. 
The objective is to begin with the given axioms and apply these inference rules to derive theorems (or tautologies) within the system. 
A proof system is said to be \emph{sound} if it proves only valid statements, and \emph{complete} if every valid statement can be derived within it. 

The field of \emph{Propositional Proof Complexity} studies the comparative strength and efficiency of such systems in the propositional setting. 
A foundational result by Cook and Reckhow~\cite{CR79} established that if one could exhibit propositional tautologies that require exponentially large proofs (that is, proofs whose length---roughly corresponding to the number of inference steps---grows exponentially) in \emph{every} propositional proof system, then this would separate the complexity classes $\mathsf{NP}$ and $\mathsf{coNP}$. 
Thus, lower bounds in proof complexity are deeply connected to some of the central open problems in computational complexity theory. 

In this work, we focus on \emph{algebraic proof systems}, in which we consider unsatisfiable systems of polynomial equations, and reasoning proceeds through algebraic manipulations such as addition and multiplication of polynomials. Here, more specifically, we consider an algebraic proof system called the Ideal Proof System ($\IPS$), which was introduced by Grochow and Pitassi~\cite{GP14}. In the last decade, different facets of this proof system have been investigated by a series of works~\cite{FSTW21,AF22,GHT,ST23,HLT24,LuST25,BLRS25,EGLT25,CGMS25}. Our paper contributes to this line of research by studying separation theorems in this context.

Hierarchy theorems are a class of separation theorems that establish that more resources yield strictly more power. For instance, the classical Time Hierarchy Theorem~\cite{Ladner} states that increasing the available running time strictly increases the computational power of a machine. Analogous results are known for several resources such as space~\cite{sp-hierarchy}, circuit depth \cite{sipser83,
hasted86}, and circuit size \cite{Jukna12,shannon49}.  
Here, we raise the question about hierarchy theorems, and more generally we study separation theorems for different subsystems of the $\IPS$. 

In order to describe our results, we first start by giving a brief introduction to the Ideal Proof Systems~\Cref{sec:intro-ips}. We then review some results from Algebraic Circuit Complexity in~\Cref{sec:intro-algcomp}, which we will use crucially in our work. Our results and techniques are described in~\Cref{sec:intro-const-depth-results} and in~\Cref{sec:intro-mult-separation}.


\subsection{Ideal Proof System}
\label{sec:intro-ips}
We begin by recalling the general framework of algebraic proof systems, focusing on the so-called \emph{static} systems.\footnote{In the literature, systems of this type are often referred to as static proof systems. Other variants, where proofs are given line-by-line, are known as dynamic proof systems. In this paper, we only consider static systems.} 
Let $\x$ denote the set of variables $\{x_1, x_2, \ldots, x_N\}$. 
Given a collection of polynomial axioms $f_1(\x), f_2(\x), \ldots, f_m(\x) \in \mathbb{F}[\x]$, the goal is to certify that there is \emph{no} Boolean assignment to the variables that simultaneously satisfies all the equalities 
$f_1(\x) = f_2(\x) = \cdots = f_m(\x) = 0$. 
To ensure that solutions are Boolean, the system is augmented with the \emph{Boolean axioms} $\{x_i^2 - x_i = 0\}_{i \in [n]}$. 

By Hilbert’s Nullstellensatz, the unsatisfiability of this augmented system can be expressed algebraically. 
Specifically, if the system has no common zero over $\mathbb{F}$, then there exist polynomials $A_1(\x), \ldots, A_m(\x)$ and $B_1(\x), \ldots, B_N(\x)$ such that
\begin{equation}
    \sum_{i \in [m]} A_i(\x) \cdot f_i(\x) + 
    \sum_{j \in [N]} B_j(\x) \cdot (x_j^2 - x_j) = 1.
\end{equation}
This identity serves as a \emph{refutation} (or \emph{proof}) of the original system. 
The complexity of such a refutation is measured in terms of the complexity of the polynomials $\{A_i\}$ and $\{B_j\}$. 

In the \emph{Ideal Proof System ($\IPS$)} introduced by Grochow and Pitassi~\cite{GP14}, the polynomials $A_i(\x)$ and $B_j(\x)$ are represented by algebraic circuits. 
This gives rise to natural complexity parameters such as the \emph{circuit size} and \emph{circuit depth} of $\IPS$ proofs. We now formally define the ideal proof system.\\ 

\begin{definition}[Ideal Proof System~\cite{GP14}]  
\label{def:ips}
Let $f_1,\ldots, f_m \in \F[x_1,\ldots, x_n]$ be a system of unsatisfiable polynomials over the Boolean cube $\Boo^{n}$. In other words, there is no Boolean assignment $\mathbf{a} \in \{0,1\}^n$ to the variables $x_1,\ldots, x_n$ so that $f_i(\mathbf{a}) = 0$ for all $i\in [m].$

Given a class of algebraic circuits $\mathcal{C}$, a \emph{$\mathcal{C}$-$\IPS$ refutation} of the system of equations defined by $f_1,\ldots,f_m$ is an algebraic circuit $C\in \mathcal{C}$ in variables $x_1,\ldots,x_n,y_1,\ldots, y_m, z_1,\ldots, z_n$ such that
\begin{itemize}
    \item $C(\mathbf{x},\mathbf{0},\mathbf{0}) = 0$, and
    \item $C(\mathbf{x},f_1,\ldots,f_m,x_1^2-x_1,\ldots,x_n^2-x_n) = 1.$
\end{itemize}
The size of the refutation is the size of the circuit $C.$

Further, if the circuit $C$ has individual degree at most $1$ in the variables $\mathbf{y}$ and $\mathbf{z}$, then we say that $C$ is a \emph{$\mathcal{C}$-$\IPSLIN$ refutation}. If the circuit $C$ has individual degree at most $1$ in the variables $\mathbf{y}$ (but not necessarily in $\mathbf{z}$), then $C$ is said to be a \emph{$\mathcal{C}$-$\IPSLINp$ refutation.\\}
\end{definition}

The general $\IPS$ where the class $\mathcal{C}$ is allowed be to be an algebraic circuit can polynomially simulate Extended Frege~\cite{GP14}, one of the strongest known propositional proof systems. Moreover, establishing lower bounds for these kind of general $\IPS$ would imply strong algebraic circuit lower bounds, a central open problem in algebraic complexity. 

While this continues to be an ambitious open problem we have many compelling new lower bound results for several restricted classes $\mathcal{C}$ such as roABPs, constant-depth circuits, and multilinear formulas~\cite{FSTW21,GHT,HLT,BLRS25,EGLT25}.

These lower bounds were established by using the already known lower bounds for the corresponding models of computation in algebraic complexity. We are also inspired by this framework. Namely, we use the separation results and hierarchy theorems from algebraic complexity theory to obtain similar results for the $\IPS$. We now review the known separation results. 

\subsection{Algebraic Circuit Complexity}
\label{sec:intro-algcomp}
We start by recalling some of the standard models of computation relevant to our results.

\textit{Algebraic circuits, formulas, constant-depth circuits, multilinear polynomials and circuits.} An \emph{algebraic circuit} is a directed acyclic graph in which each node either computes a sum (or a linear combination) of its inputs, or a product of its inputs. The leaf nodes are either variables or constants. The size of an algebraic circuit is the number of edges or wires in the circuit, and the depth of an algebraic circuit is the longest path from a leaf node (a source) to the output node (a sink). An \emph{algebraic formula} is an algebraic circuit where the output of each node feeds into at most one other node; in other words, the underlying graph of an algebraic formula is a tree. An algebraic circuit/formula is said to be constant-depth circuit/formula, if its depth is a fixed constant independent of other parameters. 

A polynomial $f(\mathbf{x}) \in \F[x_{1},\ldots,x_{n}]$ is \emph{multilinear} if in every monomial of the polynomial, the degree of any variable is at most $1$. 
An algebraic circuit/formula is multilinear if every gate computes a multilinear polynomial. An algebraic circuit is syntactically multilinear if polynomials computed by the children of any multiplication gate compute polynomials on disjoint sets of variables. 

\subsubsection{Separation results}
\label{sec:sep-results}

Our work relies heavily on the separation results known in algebraic complexity theory. Our result related to multilinear $\IPS$ is based on the following multilinear separation result. 

\paragraph{Multilinear formulas vs circuits.} One of the celebrated results in algebraic complexity is the separation between multilinear formulas and multilinear circuits. The result was established in an influential work of Raz~\cite{raz-ml-formulavscircuit}, which presented a polynomial that is computed by a polynomial sized multilinear circuit, but any multilinear formula for it requires superpolynomial size. 

The key idea involves coming up with a complexity measure for polynomials, which attains a large value for the hard polynomial, but it is considerably small for all multilinear formulas of small size. The measure from \cite{raz-ml-formulavscircuit} is defined as follows.\\

\begin{definition}[Rank measure~\cite{raz-ml-formulavscircuit}]
    \label{def:raz-measure}
    Let $\x = \{x_1, \ldots, x_{2n}\}$. Let $\y \cup \z$ be an equipartition of $\x$, i.e. $|\mathbf{y}| = |\mathbf{z}|$. For a given polynomial $f(\x)$, let $M_{\y,\z}(f)$ be a matrix with rows labeled by multilinear monomials in $\y$ variables and columns labeled by multilinear monomials in $\z$ variables. For a monomial $m_{\y}$ in $\y$ variables and $m_{\z}$ in $\z$ variables, the $M_{\y,\z}(f)[m_{\y}, m_{\z}]^{th}$ entry of the matrix is the coefficient of the monomial $m_{\y}\cdot m_{\z}$ in $f$. The measure is the rank of this matrix. 
    
    We will say that a polynomial $f$ is full-rank with respect to a partition $\y, \z$ if the rank of $M_{\y,\z}(f)$ is full, i.e. $2^n$.\\ 
\end{definition}

It was shown by Raz~\cite{raz-ml-formulavscircuit} that a multilinear formula computing any full-rank polynomial $f(\x)$ requires size $n^{\Omega(\log n)}$. In our work, we build on the full-rank polynomial defined in a subsequent work of Raz and Yehudayoff~\cite{RY08}.

\paragraph{Constant-depth hierarchy theorem}
In our work we establish a constant-depth hierarchy theorem for constant-depth $\IPS$. For this, the starting point is the constant-depth hierarchy theorem by Limaye, Srinivasan, and Tavenas~\cite{LST}. For every depth $\Delta$, they design a polynomial that is computable by polynomial size depth $\Delta$ circuits but any circuit of depth even one smaller than $\Delta$ requires superpolynomial size for it. As circuits can be converted to formulas with polynomial blow-up when the depth is constant, we state the formula version of the hierarchy theorem below. Formally, it states the following.\\ 

\begin{theorem}[Constant-depth algebraic formulas hierarchy]\label{thm:lst-hierarchy}
\cite[Theorem 5]{LST}. For every depth parameter $\Delta = \bigO(1)$, there exists an explicit set-multilinear polynomial $Q_{\Delta} \in \F[x_{1},\ldots,x_{n}]$ such that:
\begin{enumerate}
    \item There exists a constant-free\footnote{A circuit or formula $C(\mathbf{x})$ is constant free if it has no constants except at the inputs where all input gates have labels from $\mathbf{x} \cup \{-1,0,1\}$.} algebraic formula with input gates carrying labels from $\mathbf{x} \cup \{0,1\}$ which computes $Q_{\Delta}(\mathbf{x})$ in depth $\Delta$ and size $s$.
    \item Any algebraic formula computing $Q_{\Delta}(\mathbf{x})$ in depth $(\Delta/2-1)$ requires size $s^{\omega(1)}$.
\end{enumerate}
\end{theorem}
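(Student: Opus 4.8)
The plan is to obtain this statement as an essentially immediate consequence of the \emph{circuit} version of the hierarchy theorem of Limaye, Srinivasan and Tavenas~\cite{LST}, combined with the standard observation that at constant depth one may pass freely between circuits and formulas at the cost of only a polynomial blow-up in size. So the "proof" is really a short derivation plus some bookkeeping, and I would organize it in two halves matching the two items.

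For item~1 (the upper bound), I would take $Q_{\Delta}$ to be exactly the explicit set-multilinear polynomial that~\cite{LST} construct for depth parameter $\Delta$ (an iterated-matrix-multiplication / word-type polynomial whose natural recursive definition already has product-depth $\Theta(\Delta)$), together with the depth-$\Delta$, size-$\poly(n)$ circuit their construction yields. Since $\Delta = \bigO(1)$, unfolding that circuit into a tree blows up the size only by a factor $s^{\bigO(\Delta)} = \poly(s)$, and because the unfolding merely duplicates subcircuits verbatim, the resulting object is still a formula, still constant-free, and still has all input gates labelled from $\mathbf{x} \cup \{0,1\}$. This produces the required depth-$\Delta$ formula of size $s = \poly(n)$.

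For item~2 (the lower bound), I would argue by contradiction: suppose $Q_{\Delta}$ had a formula of depth $\Delta/2 - 1$ and size $S = s^{\bigO(1)}$. A formula is in particular a circuit, so $Q_{\Delta}$ would then have a depth-$(\Delta/2-1)$ circuit of size $\poly(n)$; if the exact model in~\cite{LST} is the set-multilinear one, I would first apply the constant-depth, polynomial-overhead homogenization/set-multilinearization step to put the circuit into that form. Either way this contradicts the circuit lower bound of~\cite{LST}, which forces any such low-depth circuit for $Q_{\Delta}$ to have size $s^{\omega(1)}$. Hence no polynomial-size depth-$(\Delta/2-1)$ formula can exist.

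The only genuinely delicate aspect is bookkeeping rather than any new idea: one must track the depth conventions carefully (product-depth versus the alternation-depth used in the statement, plus the constant factor lost when going from circuits to formulas and, if needed, from general to set-multilinear circuits), so that the gap ``easy at depth $\Delta$, hard at depth $\Delta/2-1$'' genuinely survives all of these conversions; and one must check that none of the transformations introduces field constants outside $\{-1,0,1\}$, which holds because each is purely syntactic (copying subcircuits, merging or splitting gates, restricting to homogeneous degree parts). I expect this constant-freeness-and-degree bookkeeping to be the part that needs to be written out explicitly, while the quantitative heart of the theorem is inherited as a black box from~\cite{LST}.
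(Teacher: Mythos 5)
You should note first that the paper does not prove this statement: it is imported as Theorem~5 of~\cite{LST}, and the paper's only commentary is the remark immediately following, which says that the version of the hierarchy theorem achieving the tighter $\Delta$ vs.\ $\Delta-1$ separation in~\cite{LST} does \emph{not} produce constant-free formulas, and that obtaining constant-freeness costs ``a slight loss in parameters,'' yielding the $\Delta$ vs.\ $\Delta/2-1$ gap as stated. Your derivation captures the right high-level architecture for such a reduction: inherit $Q_\Delta$ and its upper-bound circuit from~\cite{LST}, unfold to a formula with only a polynomial blow-up at constant depth, and argue the lower bound by contradiction against the circuit lower bound. That skeleton is sound.

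Where the proposal falls short is precisely the step the paper's remark flags as the reason a reparametrization is needed. You assert that, once the LST circuit is unfolded to a formula, it is ``still constant-free, and still has all input gates labelled from $\mathbf{x}\cup\{0,1\}$'' because each transformation is ``purely syntactic.'' That reasoning only shows unfolding does not \emph{introduce} new constants; it does not address whether the LST upper-bound construction is constant-free \emph{to begin with}, and the paper's remark says explicitly that it is not. So the delicate part is not mere bookkeeping over depth conventions: one must actually replace arbitrary field constants in the LST formula by leaves labelled from $\{0,1\}$, and account for how much depth that costs---this is the intended source of the degradation from $\Delta-1$ to $\Delta/2-1$. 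Your attempt locates the factor of~$2$ instead in product-depth versus alternation-depth conventions and in set-multilinearization overheads; those conversions do exist, but you would need to check in~\cite{LST} which of them actually applies here and whether it accounts for the quoted gap. Until the constant-freeness step and the exact depth arithmetic are made concrete, the $\Delta/2-1$ figure (and the $\{0,1\}$-only leaf restriction, which the paper later relies on in \autoref{rmk:01-out-only}) is being asserted rather than derived.
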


\begin{remark}
    In~\cite{LST}, a tighter separation is obtained. Namely, the depth hierarchy separates two consecutive depths, $\Delta$ vs. $\Delta-1$. However, the formulas arising from this are not constant-free. Depth hierarchy for constant-free formulas can be obtained by a slight loss in parameters, as mentioned in the statement above. 
\end{remark}

We are now ready to state our contributions.

\subsection{Results and Techniques: the constant-depth $\IPS$ hierarchy theorem}
\label{sec:intro-const-depth-results}

As our first result, we prove a depth-hierarchy theorem for constant-depth $\IPS$. More specifically, we prove the following theorem.\\ 

\begin{theorem}[Constant-depth IPS hierarchy]\label{thm:main-constant-depth}
Let $\mathbb{F}$ be a field of characteristic zero.
The following holds for every growing parameter $N \in \mathbb{N}$ and a depth parameter $\Gamma \in \mathbb{N}$ where $\Gamma = \bigO(1)$. 
For every depth parameter $\Gamma$, there exists a multilinear polynomial \anote{do we need to state the degree in the bound?} $f_{\Gamma} \in \F[x_{1},\ldots,x_{N}]$ which is unsatisfiable over $\Boo^{N}$ (i.e. there exists no $\mathbf{a} \in \Boo^{N}$ for which $f_\Gamma(\mathbf{a}) = 0$) such that the following two conditions hold:
\begin{enumerate}
    \item There exists an $\IPS$ refutation for $f_{\Gamma}(\mathbf{x})$ in depth $\Gamma$ and size $\bigO(s^{5})$.
    \item Any $\IPS$ refutation for $f_{\Gamma}(\mathbf{x})$ with depth $\leq \Gamma/10$ requires size $s^{\omega(1)}$. 
\end{enumerate}
\end{theorem}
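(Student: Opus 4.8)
The plan is to lift the hard set-multilinear polynomial $Q_\Delta$ from the Limaye--Srinivasan--Tavenas constant-depth hierarchy (\Cref{thm:lst-hierarchy}) into an \emph{unsatisfiable} instance $f_\Gamma$ whose $\IPS$ refutation complexity inherits the depth separation. The key obstacle is that an arbitrary hard polynomial $Q_\Delta(\mathbf{x})$ need not be unsatisfiable over $\Boo^N$, and even if it were, a small refutation circuit for it need not have anything to do with a small \emph{circuit} for $Q_\Delta$ — the connection between ``refuting $f$'' and ``computing something related to $f$'' only becomes usable when $\textsf{Image}(Q_\Delta)$ over the Boolean cube is of constant size (as flagged in the abstract). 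So the first step is to arrange exactly that: apply an \emph{addressing gadget} to $Q_\Delta$ to produce a polynomial $p_\Gamma$ with $|\textsf{Image}(p_\Gamma)| = \bigO(1)$ while preserving (up to the usual polynomial loss in size and a constant-factor loss in depth) both the upper bound — a depth-$\Gamma$, size-$\poly(s)$ formula computing $p_\Gamma$ — and the lower bound — any depth-$(\Gamma/c)$ formula computing $p_\Gamma$ needs size $s^{\omega(1)}$. This is essentially a hardness-escalation/self-reduction step and I would prove it by showing that a small low-depth circuit for the gadget-composed polynomial can be restricted/projected back to a small low-depth circuit for $Q_\Delta$ itself, contradicting \Cref{thm:lst-hierarchy}(2).

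The second step is to turn $p_\Gamma$ (image of constant size, say image contained in a set $S\subseteq\F$ with $|S|=\bigO(1)$, $0\notin S$) into the unsatisfiable instance. The natural choice is $f_\Gamma(\mathbf{x}) := \prod_{v\in S}(p_\Gamma(\mathbf{x}) - v)$ — no wait, that would be satisfiable; instead one wants a single polynomial equation with no Boolean solution, so I would instead take $f_\Gamma := \prod_{v \in S}\big(\,\text{some shifted/normalized version}\,\big)$ designed so that $f_\Gamma(\mathbf{a})$ is a fixed nonzero field element for every $\mathbf{a}\in\Boo^N$, e.g. $f_\Gamma(\mathbf{x}) = \text{(a low-depth polynomial that equals a nonzero constant on the cube but is not the constant polynomial)}$, obtained by interpolating through the constantly-many values $p_\Gamma$ can take. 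Concretely: let $g$ be the univariate polynomial of degree $<|S|$ with $g(v) \neq 0$ for all $v\in S$ and such that $g\circ p_\Gamma$ is easy to refute when we know the image; then $f_\Gamma := g(p_\Gamma(\mathbf{x})) - c$ for the appropriate constant $c$. Since $|S|=\bigO(1)$, composing with $g$ costs only $\bigO(1)$ extra depth and a constant-power blowup in size, so $f_\Gamma$ has a depth-$\Gamma$, size-$\bigO(s^{O(1)})$ formula. The point of the constant-size image is that the refutation for $f_\Gamma$ is then genuinely cheap: one writes $1 = (f_\Gamma + c) \cdot h(\mathbf{x}) \pmod{\text{Boolean ideal}}$ where $h$ is again a univariate composition with $p_\Gamma$, giving the size-$\bigO(s^5)$ depth-$\Gamma$ $\IPS$ refutation claimed in part (1) (the fifth power absorbing the constant-many multiplications and the gadget overhead).

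For the lower bound in part (2), I would argue the contrapositive via the \emph{multiples method} (as the theorem statement announces): suppose $C$ is an $\IPS$ refutation of $f_\Gamma$ of depth $\le \Gamma/10$ and size $s'$. Plugging the axioms in, $C(\mathbf{x}, f_\Gamma, x_1^2-x_1,\dots) = 1$; the substitution $\mathbf{y}\mapsto f_\Gamma$ and $\mathbf{z}\mapsto (x_i^2-x_i)$ yields, after the standard manipulation, a low-depth circuit of size $\poly(s')$ computing a nonzero multiple of $p_\Gamma$ (or of $Q_\Delta$) — here is where the constant-size image is used again, to control the degree and structure of the multiplier so that the multiples method applies and so that a circuit for ``$p_\Gamma$ times a bounded-degree nonzero polynomial'' still yields a circuit for $p_\Gamma$ of comparable depth and size. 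Then invoking the lower bound from step 1 (itself a consequence of \Cref{thm:lst-hierarchy}(2) after the gadget lift), we get $\poly(s') = s^{\omega(1)}$, hence $s' = s^{\omega(1)}$, which is the desired superpolynomial bound. The main technical obstacle I anticipate is step 1 combined with the degree bookkeeping in step 3: one must choose the addressing gadget and the univariate compositions so that (a) the image really collapses to constant size, (b) the depth overhead is only an $\bigO(1)$ additive/multiplicative term so that a depth-$\Gamma/10$ refutation of $f_\Gamma$ still translates to a depth-$(\Delta/2-1)$-type object for $Q_\Delta$, and (c) the multiples method's hypotheses (low-degree multiplier, nonzeroness, hardness-preservation under taking multiples) are all met — reconciling these three constraints simultaneously, with the constant $10$ leaving enough slack, is where the real work lies.
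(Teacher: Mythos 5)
Your high-level strategy matches the paper's: lift $Q_\Delta$ with addressing gadgets at the addition gates so the resulting polynomial $f'_\Gamma$ takes only $\{0,1\}$ values on the cube, shift by a constant to make the instance $f_\Gamma$ unsatisfiable, recover $Q_\Delta$ from $f_\Gamma$ by a (non-Boolean) substitution of the gadget variables, and apply the multiples method. However, your sketch leaves two genuine gaps, and both fall exactly where the technical work lies.

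For the upper bound, you assert that $1 = (f_\Gamma + c)\cdot h \pmod{\text{Boolean ideal}}$ gives a size-$O(s^5)$, depth-$\Gamma$ $\IPS$ refutation ``absorbing the constant-many multiplications and the gadget overhead,'' but an $\IPS$ refutation must exhibit explicit coefficients $B_j$ with $\sum_j B_j\cdot(x_j^2 - x_j) = (f_\Gamma + c)\cdot h - 1$, and you give no argument that these $B_j$ have constant depth and size $\poly(s)$. This is not automatic: one must prove, by structural induction on the gadget-augmented formula $C'$, that for every gate $g$ the polynomial $g^2 - g$ lies in the Boolean ideal with coefficients of size $O(\mathrm{size}(g)^4)$ and depth $O(\mathrm{depth}(g))$. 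The telescoping decomposition at $\times$ gates and the interaction of the addressing products $A_{g,\ell}\cdot A_{g,\ell'}$ with the $y_j^2 - y_j$ axioms at $+$ gates are exactly what make this induction close, and they are where the exponent $5$ comes from; without that lemma the claimed bound is an assertion, not a proof. (Incidentally, the univariate-interpolation construction of $f_\Gamma$ is unnecessary: since the LST formula is constant-free with labels in $\{0,1\}$, the gadgeted $f'_\Gamma$ already maps the cube into $\{0,1\}$, and one can simply take $f_\Gamma = f'_\Gamma - 2$, with multiplier $-\tfrac12(f'_\Gamma + 1)$.)

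For the lower bound, you claim the constant-size image ``controls the degree and structure of the multiplier so that a circuit for $p_\Gamma$ times a bounded-degree nonzero polynomial still yields a circuit for $p_\Gamma$.'' This is where the argument would actually fail: after the standard FSTW substitution, $1 - C''(\cdot,0,\mathbf{x}^2-\mathbf{x},\mathbf{y}^2-\mathbf{y}) = f_\Gamma\cdot h$, and the cofactor $h$ depends on the refutation circuit $C''$, not on the image of $f_\Gamma$; its degree is $\poly(s'')$, not $O(1)$. To pass from a small low-depth circuit for the product $f_\Gamma\cdot h$ to one for $f_\Gamma$ alone you need a closure-under-factors theorem for constant-depth formulas. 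The paper invokes the recent factorization result of \cite{BKRRSS} (Theorem 1.1), which costs only $+O(1)$ depth and a polynomial size blow-up; alternatively \cite{CKS-19} suffices since all relevant degrees are polylogarithmic. Without some such factoring result the multiples method does not deliver a circuit for $f_\Gamma$ (hence, after the restriction, for $Q_\Delta$), and the contradiction with the LST lower bound never materializes.
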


To describe the proof strategy, we prove the above theorem for a simpler case. Let $Q_\Delta(\x)$ be the polynomial used by~\cite{LST} in~\Cref{thm:lst-hierarchy}. Now consider $g_\Delta(\x, y)$ defined as $Q_\Delta(\x) \cdot y \cdot (1-y)$, where $y$ is a new variable. First, observe that the polynomial evaluates to $0$ over the Boolean cube. In fact, this is true if we only consider the Boolean evaluations of $y$. Therefore, $g_\Delta(\x,y) - 1$ is unsatisfiable. 
Moreover, $g_\Delta(\x,y) - 1 \equiv 1 \mod{ y^2 -y}$. And it is easy to see that $g_\Delta(\x,y)$ has the same upper bound as $Q_\Delta(\x)$. Thus, we get the upper bound. 

For the lower bound, we will use the multiples method. The method was introduced in~\cite{FSTW21} and it has been used successfully for $\IPS$ lower bounds in~\cite{FSTW21,AF22,Andrews2025}. We now describe how one can use this method to obtain a lower bound.

Consider $g_{\Delta}(\x,y)-1$. Using~\Cref{thm:lst-hierarchy}, we know that it does not have polynomial sized formulas of depth $(\Delta/2-1)$. Moreover, due to the recent work on factors of constant-depth formulas~\cite{BKRRSS}, we also know that every multiple of the polynomial of depth $(\Delta/2-\bigO(1))$ must have superpolynomial size. That is, the polynomial $g_{\Delta}(\mathbf{x},y) - 1$ and all its multiples are hard for depth $(\Delta/2 - \bigO(1))$. This property suffices for the multiples method to be applicable, as we explain next. Specifically, our polynomial system is $f_\Delta = g_{\Delta}(\x,y)-1$, $\{x_i^2-x_i\}_{i \in [N]}$, and $\{y^2-y\}$. IPS refutation is such that 
$C(\x, y,u, \mathbf{0}, 0) = 0$ and $C(\x,y, f_{\Delta}, \x^2 - \x, y^2 -y) =1$,
where $\x^2-\x$ denotes $\{x_i^2-x_i\}_{i}$. 

We now express $C(\x,y, f_{\Delta}, \x^2 - \x, y^2 -y)$ 
as a univariate in $f_\Delta$ and we obtain \[\sum_{i \geq 1 } C_i(\x, y, \x^2-\x, y^2-y) f_\Delta^i = 1 - C(\x, y, 0, \x^2-\x, y^2-y)\]

for some $C_is$. This shows that a multiple of $f_\Delta$ has the same complexity as $C(\x,y, 0,\x^2-\x, y^2-y)$. But we know that all the multiples of $f_\Delta$ are hard. This gives an $\IPS$ lower bound.

\begin{remark}
    \label{rmk:non-mult}
    The above proof outline basically proves~\Cref{thm:main-constant-depth} when the hard instance is non-multilinear. Note that the polynomial $Q_\Delta$ is multilinear, but the hard instance is non-multilinear in $y$. We extend the ideas presented in the outline above and obtain a hard instance that is multilinear. 
\end{remark}

\paragraph{A hard multilinear instance.} The technical challenge in our work is designing an instance that is \emph{multilinear}. We seek such an instance for the following reason: In algebraic proof complexity, the goal is to find an instance that is itself quite \emph{easy} to compute, but its refutation is hard. There are many different ways to quantify easiness. However, one of the standard ways is to ask for a hard instance to be multilinear. Almost all the known hard instances in this literature are multilinear. (See for instance~\cite{FSTW21,GHT,HLT}). In fact, the hard instance naturally arising from the algebraic encoding of CNF SAT is also multilinear. 

There are some challenges that arise when we require a multilinear instance. 

    \textbf{Making the instance multilinear.} As mentioned above, the hard polynomial $Q_\Delta$ from~\cite{LST} is already multilinear. However, unfortunately, we do not know how to upper bound the complexity of IPS refutations of $Q_\Delta$ itself with respect to depth $\Delta$. To fix this, we modify $Q_\Delta$ so that the (new) hard instance takes only 
    $\bigO(1)$ distinct values over Boolean evaluations. 
    We rely on this fact for our upper bound (see the discussion below). Our upper bound proof is further simplified if the polynomial takes only $0$-$1$ values over the Boolean cube. So, we bake these two properties into the design of the polynomial: (a) it is multilinear, and (b) it takes only Boolean values over the Boolean hypercube. 

    We start with $Q_\Delta$ as in~\Cref{thm:lst-hierarchy} and take the constant-depth formula implementation for it. In this formula, we introduce \emph{an addressing gadget} for each $+$ gate. An addressing gadget is a multilinear polynomial that works like a \emph{multiplexer}. For a $0$-$1$ values as inputs to the gadget polynomial, it \emph{activates} one of the inputs to the plus gate and \emph{suppresses} all the other inputs. As a result, if we inductively maintain $0$-$1$ evaluations for all the gates over the Boolean cube, the gadget allows us to propagate this property to the next gate. 
    
    \textbf{Proving the upper bound.} 
    We work with the formula $C$ that \emph{computes} our hard polynomial instance. By construction of the polynomial, we have the guarantee that every gate in the formula evaluates to $0$ or $1$ over the hypercube. Using this fact, we prove by induction on the depth of the formula that for a gate $g$, the polynomial $g^2 - g$ is in the ideal generated by $\{g_i^2-g_i\}_i$ and the Boolean axioms, where $\{g_1, \ldots, g_t\}$ are the inputs to the gate $g$. This suffices to obtain the overall upper bound.

    \textbf{Proving the lower bound.} The lower bound proof proceeds by observing that there exists an assignment to  the gadget variables such that under that assignment, the hard instance becomes equal to the hard polynomial $Q_\Delta$ from~\Cref{thm:lst-hierarchy}. As this polynomial and all its multiples are hard (due to~\cite{BKRRSS}), we obtain our lower bound using the multiples method.

    The advantage of using the multiples method (instead of the functional) is that we obtain the lower bound theorem for $\IPS$ and not just for the more restrictive $\IPSLIN$.



\subsection{Results and Techniques: multilinear-$\NCo$ vs. multilinear-$\NCt$}
\label{sec:intro-mult-separation}    

In this section, we state our separation theorem for a multilinear-$\IPS$ system. We say that for a polynomial instance $Q(\x)$ unsatisfiable over the Boolean cube, a functional refutation is a polynomial $G(\x)$ such that $G(\x) \cdot Q(\x) \equiv 1 \mod{\x^2 -\x}$. Further, we will say that it is a multilinear functional refutation is $G(\x)$ is multilinear. 
We prove the following theorem. 

\begin{restatable}[multilinear $\NCo$ vs multilinear $\NCt$-$\IPS$]{theorem}{mainthmmultNC}\label{thm:main-mnc1-mnc2}
     Fix a field, $\F$ of characteristic $0$. For every growing parameter $N \in \N$, there is a multilinear polynomial $Q \in \F[x_1, \ldots, x_N]$ which is unsatisfiable over $\{0,1\}^N$ such that 
    \begin{enumerate}
        \item There is a multilinear functional refutation for $Q(\mathbf{x})$, say $G(\x)$, computable by a syntactic multilinear circuit of polynomial size and $O(\log^2 N)$ depth. 
        \item Any multilinear-$\NCo$-$\IPSLINp$ for it requires size $N^{\Omega(\log N)}$. 
    \end{enumerate}
\end{restatable}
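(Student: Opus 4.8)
The plan is to prove \Cref{thm:main-mnc1-mnc2} via the \emph{functional method}, reducing both parts to Raz's multilinear formula lower bound for full-rank polynomials (\cite{raz-ml-formulavscircuit}) together with the existence of an explicit full-rank polynomial with a small syntactic multilinear $\NCt$ circuit (Raz--Yehudayoff, \cite{RY08}). Fix such a polynomial $P(x_1,\dots,x_n)$: it is multilinear, is computed by a fan-in-two, polynomial-size, $O(\log^2 n)$-depth syntactic multilinear circuit $\Phi$, and is full-rank under a random equipartition of its variables, so that by Raz's rank argument any multilinear polynomial whose partition matrix differs from that of $P$ in only $O(1)$ entries requires multilinear formulas of size $n^{\Omega(\log n)}$.

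First I would \emph{lift} $\Phi$ with an addressing gadget: replace every $+$-gate $v_1+v_2$ by the multilinear multiplexer $(1-a_v)\,v_1 + a_v\,v_2$, where $a_v$ is a fresh ``address'' variable, one per $+$-gate. Call the result $\widehat\Phi$, let $\mathbf{a}$ be the set of address variables, $N = n+|\mathbf{a}| = \poly(n)$, and $\widehat P(\x,\mathbf{a})$ the polynomial computed. Then $\widehat\Phi$ is syntactic multilinear (each multiplexer is a sum of two products $(1-a_v)v_1$ and $a_v v_2$ on disjoint variable sets, since $a_v$ is fresh), of polynomial size and depth $O(\log^2 N)$; and by induction on depth $\widehat P$ takes only the values $0,1$ over $\Boo^N$ (leaves are $\Boo$-valued, products of $\Boo$-valued polynomials are $\Boo$-valued, and a multiplexer with $a_v \in \Boo$ selects one of its inputs), so $|\textsf{Image}(\widehat P)| = 2$. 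Now set the instance $Q(\x,\mathbf{a}) := \widehat P(\x,\mathbf{a}) + 1$: it is multilinear and takes only the values $1,2$ over $\Boo^N$, hence never vanishes there and is unsatisfiable. Since $1/Q$ takes the values $1,\tfrac12$ over $\Boo^N$, the multilinear polynomial $G := \tfrac32 - \tfrac12 Q = 1 - \tfrac12 \widehat P$ satisfies $G\cdot Q \equiv 1 \pmod{\x^2-\x,\ \mathbf{a}^2-\mathbf{a}}$ and is therefore the functional refutation of $Q$; adjoining to $\widehat\Phi$ a single linear-combination gate computing $-\tfrac12\widehat P + 1$ gives it a syntactic multilinear circuit of polynomial size and depth $O(\log^2 N)$, which is Part~1.

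For Part~2, let $C(\x,\mathbf{a},y,\z)$ be a multilinear-$\NCo$-$\IPSLINp$ refutation of the system $\{Q\}\cup\{x_i^2-x_i\}\cup\{a_v^2-a_v\}$, with $y$ the placeholder for $Q$ and $\z$ for the Boolean axioms. As $C$ has $y$-degree at most $1$, write $C = y\cdot A(\x,\mathbf{a},\z) + B(\x,\mathbf{a},\z)$; from $C(\x,\mathbf{a},\mathbf{0},\mathbf{0})=0$ we get $B(\x,\mathbf{a},\mathbf{0})=0$, and substituting the axioms and evaluating over $\Boo^N$ (where every $x_i^2-x_i$ and $a_v^2-a_v$ vanishes) gives $Q\cdot A(\x,\mathbf{a},\mathbf{0}) = 1$ there. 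But $A(\x,\mathbf{a},\mathbf{0}) = C(\x,\mathbf{a},1,\mathbf{0})$ is multilinear in $\x\cup\mathbf{a}$ and agrees with $1/Q$ over $\Boo^N$, hence equals $G$; so $G$, and therefore $\widehat P = 2(1-G)$, is computed by a multilinear formula of size at most $|C|+O(1)$. It therefore suffices to show that $\widehat P$ requires multilinear formulas of size $N^{\Omega(\log N)}$, which (since $N = \poly(n)$) is the same as $n^{\Omega(\log n)}$.

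The main obstacle is exactly this last claim --- that the gadget-lifted $\widehat P$ remains hard for multilinear formulas. In the constant-depth setting of \Cref{thm:main-constant-depth} it is enough to recover the original hard polynomial by a \emph{restriction} of the gadget variables, but here no restriction of the address variables turns a multiplexer back into a sum, so the multilinear formula lower bound must instead come from a rank property of $\widehat P$ over the \emph{enlarged} variable set $\x\cup\mathbf{a}$. The plan is to show that $\widehat P$ (equivalently $G$, which differs from it by an affine map and hence in at most one entry of any partition matrix) is full-rank under a random equipartition of $\x\cup\mathbf{a}$, transferring the Raz--Yehudayoff rank bound for $P$ through the multiplexer structure --- using that averaging the address bit of each multiplexer recovers the corresponding sum, so that averaging out all of $\mathbf{a}$ recovers $P$ --- while controlling how the address variables fall on the two sides of a random partition. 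Carrying out this rank-transfer, simultaneously with keeping $\widehat P$ both $\Boo$-valued and in syntactic multilinear $\NCt$, is where the real work lies.
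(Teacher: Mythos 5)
Your plan for the upper bound and the functional-method reduction is correct and matches the paper (Lemma \ref{lem:mult-main}): take a $\{0,1\}$-valued multilinear polynomial $\widehat P$, set $Q = \widehat P + 1$ (or, in the paper's normalization, $Q = 2 - \widehat P$), and the functional refutation is the affine image $G = 1 - \tfrac12\widehat P$ of $\widehat P$, so it has the same multilinear circuit complexity. The addressing/multiplexer gadget idea is also the right one.

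The genuine gap is exactly where you flag it, and your proposed repair is the wrong framework. You suggest proving that $\widehat P$ is full-rank under a random equipartition of \emph{all} of $\x \cup \mathbf{a}$; but that is not what Raz or Raz--Yehudayoff prove, and for their polynomial it is not true: $F(\ubar,\vbar)$ from \cite{RY08} is only full-rank \emph{after} the auxiliary variables $\vbar$ are substituted by field constants chosen as a function of the $\ubar$-equipartition. The correct framework is therefore: for every equipartition of a designated set $\ubar$ of ``main'' variables, exhibit a constant substitution of the gadget/auxiliary variables that makes the restricted polynomial full rank with respect to that partition; Raz's argument then gives the $n^{\Omega(\log n)}$ multilinear-formula bound. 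Your ``averaging out $\mathbf{a}$'' remark (setting each $a_v = 1/2$, which does turn a multiplexer back into a scalar multiple of the sum --- so restriction \emph{does} work, contrary to your claim) is actually the right first move, but it only recovers $F(\ubar,\vbar)$ itself, which still carries the $\vbar$ auxiliary variables whose substitution must depend on the partition. If you lift an \emph{arbitrary} multilinear $\NCt$ circuit for $F$ with generic multiplexers, you lose the algebraic structure needed to choose that partition-dependent substitution.

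The paper sidesteps this by not lifting an arbitrary circuit: it inserts addressing gadgets directly into the \emph{inductive definition} of the Raz--Yehudayoff polynomial, producing $p_{i,j}(\ubar,\wbar)$ with gadget variables $w^{[i,j]}_{\text{top}}$, $w^{[i,j]}_{\text{leaf}}$ and an addressing block $\tilde W^{[i,j]}$ at each level (Definition \ref{def:mult-nc-hardpoly}). The $\{0,1\}$-valuedness (Part 1 of Theorem \ref{thm:mnc1-mnc2-hardpoly}) and the $\NCt$ upper bound (Part 2) follow as you expect. Crucially, the lower bound (Lemma \ref{lem:ranklb-mnc1}) then proceeds by induction on intervals, mirroring \cite{RY08}: for any equipartition $\y\cup\z$ of $\ubar$, it picks a $\wbar$-substitution \emph{adapted to that partition} --- setting $w^{[1,2n]}_{\text{top}}=0$ and $w^{[1,2n]}_{\text{leaf}}=\tfrac12$ when $u_1,u_{2n}$ are split across $\y,\z$, or $w^{[1,2n]}_{\text{top}}=1$ with the addressing variables encoding a balanced split point $r$ when they are on the same side --- and shows full rank by the same recursion RY use. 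This tailored substitution is precisely what a generic-circuit lifting does not let you do, and is the missing ingredient in your outline.
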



\begin{remark}
    \label{rmk:caveat}
    Note that in the above theorem the lower bound holds for multilinear-$\NCo$-$\IPSLINp$. However, the upper bound is not multilinear-$\NCt$-$\IPSLINp$. Instead, we only get that the refutation has multilinear-$\NCt$ circuits modulo the Boolean axioms. We do not get a bound on the complexity of the refutations of the Boolean axioms. In spite of this, we believe that the above result gives something we did not know before. 

    \begin{itemize}
        \item Lower bounds for multilinear-$\NCo$-$\IPSLINp$ are known since the work of~\cite{FSTW21}. Their hard instance is a \emph{lifted subset-sum}, i.e. $f(\x,\z) = \sum_{i,j} z_{i,j} x_i x_j$. We observe that its functional refutation is quite hard. Specifically, it encodes the Clique polynomial over the Boolean cube. This means that it cannot have small functional refutations unless VP equals VNP. (See~\Cref{sec:appendix}). 
        \item It is known that there are subsystems of multilinear-$\IPSLIN$ and multilinear-$\IPSLINp$ that can refute interesting unsatisfiable instances (Section 4,~\cite{FSTW21}). For example, they can refute the subset-sum  instances of the type $\sum_{i} \alpha_i x_i - \beta$, where $\alpha_i$s are $O(1)$ and $\beta$ is chosen such that the instance becomes unsatisfiable\footnote{They can allow slightly general $\alpha_i$s. See Proposition 4.15 from~\cite{FSTW21} for more details.}. While such instances have multilinear upper bounds, the upper bound proofs seem to heavily rely on the fact that the subset-sum polynomial has degree $1$. Consider a simple instance $xy = 2$. This is an unsatisfiable instance over the Boolean cube. Here is one of its refutations: \[x^2y^2 - xy = x^2y^2 - x^2y + x^2y -xy = x^2(y^2-y) + y(x^2-x).\]
        Notice that the refutation for the Boolean axiom $y^2-y$ is not multilinear. (There is another refutation for the same and in that, the refutation of the other Boolean axiom is not multilinear.) In fact, any refutation of this example is not multilinear. (See Example 4.7 in~\cite{FSTW21}.) This gives an indication that a degree-$2$ (or degree-greater-than-$2$) hard instance may not necessarily have multilinear proofs. 
        \item Our lower bounds are obtained using the functional method. This ensures that the hardness of our instance can be ascribed to the hardness of refuting the instance irrespective of the complexity of the refutations of the Boolean axioms. Thus, the result achieves a separation for the functional refutation of our instance. 
    \end{itemize}
\end{remark}

To describe the components of the proofs, we start with a very simple example. Let $z \in\{0,1\}$. In this case, it is easy to see that $2-z$ is unsatisfiable and $2-z \times \frac{1+z}{2} \equiv 1$ modulo $z^2-z$. That is, $\frac{1+z}{2}$ is a refutation of $2-z$ modulo the Boolean axioms. We make use of this idea in our proof. In order to prove the theorem, we again design a polynomial $p(\x)$ such that it evaluates to $0$ or $1$ over the Boolean cube. Then, our unsatisfiable instance is $2-p(\x)$ and its functional refutation is $(1+p(\x))/2$. 

If we can design a multilinear polynomial such that 
\begin{itemize}
    \item it is computed by multilinear $\NCt$ circuits, 
    \item any multilinear $\NCo$ circuit for it requires $N^{\Omega(\log N)}$ size,
    \item and it takes only Boolean values over the Boolean cube
\end{itemize}
then we get the separation. From a famous work of Raz~\cite{raz-ml-formulavscircuit} we get a polynomial that satisfies the first two properties listed above. A subsequent work of Raz and Yehudayoff~\cite{RY08} also gives another candidate polynomial. Unfortunately, neither of them have the third property. We tweak the polynomial from~\cite{RY08} using the addressing gadgets to ensure that we get a multilinear polynomial with all these properties. 

\paragraph{Applicability of the technique.} The proof method used for proving~\Cref{thm:main-mnc1-mnc2} points to its applicability to other scenarios. For example, the same proof method can be applied in the context of the constant-depth hierarchy theorem (as in~\Cref{thm:main-constant-depth}). The upper bound stays as is, but the lower bound is obtained using the method described above. This will work and will give a lower bound for $\IPSLIN$ instead of a lower bound for $\IPS$. 

There are other separation results known in algebraic complexity, especially in the multilinear setting. For example, results of~\cite{RY09,CELS-18-multilinear-hierarchy}. Our proof method is likely to be applicable 
in all these settings to obtain separation results originating for different sub-systems of the $\IPS$ from these separation results, just like we proved~\Cref{thm:main-mnc1-mnc2} from the separation results of Raz~\cite{raz-ml-formulavscircuit,RY08}.

\section{Constant-depth Hierarchy}
In this section, we will prove \Cref{thm:main-constant-depth}. To do so, we start with the hard polynomials from \Cref{thm:lst-hierarchy} and modify them by using addressing gadgets.

Throughout this section, we will assume without loss of generality that every algebraic formula $C$ is layered and has alternating addition and multiplication gates, with the top gate being an addition gate. For every gate $g$ in a formula $C(x_{1},\ldots,x_{N})$,
\begin{itemize}
    \item $f_g(x_{1},\ldots,x_{N})$ will denote the polynomial computed at the gate $g$.
    \item Let $\mathrm{depth}(g)$ denote the depth of gate $g$, i.e. the length of the longest path from inputs to the gate. Let $\mathrm{size}(g)$ denote the number of wires in the sub-formula rooted at $g$. Finally,  let $\mathrm{fanin}(g)$ denote the fan-in of gate $g$. 
\end{itemize}

\subsection{Adding addressing gadgets at sum gates}
In this subsection, we define a modification for any given algebraic formula, ensuring that the new formula is a $\set{0,1}$-valued on Boolean inputs. Furthermore, the polynomial computed by the original formula can be easily retrieved from the new formula via partial evaluation of its variables.\\

\begin{definition}\label{def:addressing_fns}
    Let $n \in \mathbb{N}$. For each $0 \leq j \leq n$, let $t_n \in \mathbb{N}$ denote the smallest $t_n$ such that $2^{t_n} >n$. Let $B_{n,0}(j) \subseteq \{0,1,...,t_n\}$ denote the indices which are $0$ in the binary representation of $j+2^{t_n}$. Similarly, let $B_{n,1}(j) \subseteq \{0,1,...,t_n\}$ denote the set of indices which are 1 in the binary representation of $j+2^{t_n}$.

    The \emph{addressing gadget} of $j$ in $n$ is defined as
    $$A_{n,j}(y_0,...,y_{t_n}) = \prod_{i \in B_{n,0}(j)}(1-y_i)\prod_{i \in B_{n,1}(j)}y_i.$$\\
\end{definition}

Note that $A_{n,j}$ uses the same set of variables $\{y_0,...,y_{t_n}\}$ exactly once for all $0 \leq j \leq n$ since $j+2^{t_n}$ always uses exactly $t_n+1$ bits for $j < 2^{t_n}$. In particular, $A_{n,j}$ is multilinear.\\

\begin{lemma}\label{lem:addressing-fns-partial-eval}
    Let $n,j \in \mathbb{N}$ with $0 \leq j \leq n$ and let $(b_0,...,b_{t_n}) \in \{0,1\}^{t_n+1}$, with $t_n$ as defined in $\autoref{def:addressing_fns}$. Then the following is true over any field, $\mathbb{F}$, of characteristic $p \neq 2$:
    \begin{enumerate}
        \item  $A_{n,j}(b_0,...,b_{t_n}) = \begin{cases}
            1 & \text{if $b_i =0$ for all $i \in B_{n,0}(j)$ and $b_i = 1$ for all $i \in B_{n,1}(j)$}, \\
            0 & \text{for all other choices of $(b_0,...,b_{t_g}) \in \{0,1\}^{t_g+1}$}
        \end{cases}$
        \item $A_{n,j}(\frac{1}{2},\frac{1}{2},...,\frac{1}{2},2^{t_n}) = 1$.
    \end{enumerate}
\end{lemma}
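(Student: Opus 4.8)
The plan is to prove both parts by directly evaluating the product defining $A_{n,j}$, using only the elementary facts about binary representations recorded in \Cref{def:addressing_fns}.

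For part 1, I would first note that $B_{n,0}(j)$ and $B_{n,1}(j)$ are disjoint and that every factor of $A_{n,j}$ is either $(1-y_i)$ for some $i\in B_{n,0}(j)$ or $y_i$ for some $i\in B_{n,1}(j)$. Evaluated at a Boolean point $(b_0,\dots,b_{t_n})$, the factor $(1-y_i)$ equals $1$ if $b_i=0$ and $0$ if $b_i=1$, while $y_i$ equals $1$ if $b_i=1$ and $0$ if $b_i=0$. Since a product of elements of $\{0,1\}$ in a field equals $1$ exactly when all of its factors equal $1$ and equals $0$ otherwise, $A_{n,j}(b_0,\dots,b_{t_n})=1$ precisely when $b_i=0$ for all $i\in B_{n,0}(j)$ and $b_i=1$ for all $i\in B_{n,1}(j)$, and $A_{n,j}(b_0,\dots,b_{t_n})=0$ for every other Boolean input. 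This argument uses only $0\neq 1$ and hence is valid over any field; the hypothesis $p\neq 2$ plays no role here.

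For part 2, the two structural facts I would isolate are: (i) $B_{n,0}(j)$ and $B_{n,1}(j)$ partition $\{0,1,\dots,t_n\}$, so the $t_n+1$ indices each contribute exactly one factor to $A_{n,j}$; and (ii) $t_n\in B_{n,1}(j)$ for every $0\le j\le n$. Fact (ii) holds because $t_n$ is the least integer with $2^{t_n}>n$, so $0\le j\le n<2^{t_n}$, whence $2^{t_n}\le j+2^{t_n}<2^{t_n+1}$, and therefore the bit of $j+2^{t_n}$ at position $t_n$ equals $1$. Now substitute $y_i=\tfrac12$ for $0\le i\le t_n-1$ and $y_{t_n}=2^{t_n}$: by (ii) the index $t_n$ contributes the factor $y_{t_n}=2^{t_n}$, and by (i) each of the remaining $t_n$ indices in $\{0,\dots,t_n-1\}$ contributes a factor equal to $\tfrac12$ (namely $1-\tfrac12$ if it lies in $B_{n,0}(j)$ and $\tfrac12$ if it lies in $B_{n,1}(j)$). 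Hence $A_{n,j}(\tfrac12,\dots,\tfrac12,2^{t_n})=2^{t_n}\cdot(\tfrac12)^{t_n}=1$, where the last step uses that $2$ is invertible, i.e. $p\neq 2$. I anticipate no genuine obstacle: the statement is essentially a bookkeeping exercise about indicator polynomials and binary expansions, and the only point needing a moment's care is fact (ii) — that the top address bit is always "on" — since this is exactly what makes the single coordinate $2^{t_n}$ cancel the $t_n$-fold factor of $\tfrac12$.
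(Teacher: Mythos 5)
Your proof is correct and follows essentially the same approach as the paper's: part 1 by direct Boolean evaluation of the product of indicator factors, and part 2 by isolating the factor $y_{t_n}$ (using that the top bit of $j+2^{t_n}$ is always $1$) and observing that the remaining $t_n$ factors each evaluate to $\tfrac12$, giving $2^{t_n}\cdot(\tfrac12)^{t_n}=1$. You fill in slightly more detail than the paper — in particular the justification that $t_n\in B_{n,1}(j)$ via $0\le j\le n<2^{t_n}$, and the remark that part 1 needs no assumption on the characteristic — but the argument is the same.
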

\begin{proof}
    \begin{enumerate}
        \item This is clear by construction.
        \item Since the bit corresponding to $y_{t_n}$ is always $1$ in the binary representation of $j+2^{t_n}$, we have that $y_{t_n}$ is a factor of $A_{n,j}$. Thus the evaluation becomes
        $$\left(\prod_{i \in B_{n,0}(j)}\left(1- \frac{1}{2}\right)\prod_{i \in B_{n,1}(j) \setminus \{t_n\}} \frac{1}{2} \right) \cdot 2^{t_n} = \frac{1}{2^{t_n}}\cdot 2^{t_n} = 1,$$
        since $|B_{n,0}(j)| + |B_{n,1}(j) \setminus \{t_n\}| = t_n$.
    \end{enumerate}
\end{proof}

The following lemma shows how these addressing gadgets are applied:\\

\begin{lemma}\label{lem:address-fns-application}
    Let $C(\mathbf{x})$ be a constant-free formula of size $s$ and depth $\Delta$  computing some polynomial $f(\mathbf{x)}$. We construct a new formula $C'(\mathbf{x},\mathbf{y})$ computing a new polynomial, $f'(\mathbf{x},\mathbf{y})$, as follows: 
    
    For any addition gate,
    $$g(\mathbf{x}) = \sum_{j=0}^{\operatorname{fanin}(g)-1} g_j(\mathbf{x})$$
    of $C(\mathbf{x})$, we replace $g$ by the subcircuit,
    $$g'(\mathbf{x},\mathbf{y}_g) = \sum_{j=0}^{\operatorname{fanin}(g)-1} g_j(\mathbf{x})\cdot A_{g,j}(\mathbf{y}_{g}),$$
    where we abuse notation and write $A_{g,j} := A_{\operatorname{fanin}(g)-1,j}$ and $t_g:=t_{\operatorname{fainin}(g)-1}$ and where $\mathbf{y}_{g} = \{y_{g,0},...,y_{g,t_g}\}$ is a fresh set of variables for each addition gate, $g$. Then 
    $$\mathbf{y} = \bigcup_{\text{$g$ addition gate in $C(\mathbf{x})$}} \mathbf{y}_{g},$$
    with $|\mathbf{y}| = \mathcal{O}(s \log s)$.
    We leave multiplication gates unchanged.

    Then the following are true over any field, $\mathbb{F}$, of characteristic $p \neq 2$:
    \begin{enumerate}
        \item $g'(\mathbf{x}, \mathbf{b}) = g_j(\mathbf{x})$ if $\mathbf{b}$ is the binary representation of $j+2^{t_g}$ as a vector.
        \item $f'(\mathbf{a},\mathbf{b}) \in \{-1,0,1\}$ for any choice of $(\mathbf{a},\mathbf{b}) \in \{0,1\}^{|\mathbf{x}|+|\mathbf{y}|}$. 
        \item There exists $\mathbf{b} \in \mathbb{F}^{|\mathbf{y}|}$ such that $f'(\mathbf{x}, \mathbf{b}) = f(\mathbf{x)}$.
        \item $C'(\mathbf{x},\mathbf{y})$ is of size $\mathcal{O}(s \log s)$ and depth at most $2\Delta+2$.
    \end{enumerate}
\end{lemma}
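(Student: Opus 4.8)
The plan is to prove each of the four items in order, since items 2--4 build on item 1 and on \Cref{lem:addressing-fns-partial-eval}. For item 1, I would simply substitute $\mathbf{y}_g = \mathbf{b}$ into the definition of $g'$: by \Cref{lem:addressing-fns-partial-eval}(1), the gadget $A_{g,k}(\mathbf{b})$ evaluates to $1$ exactly when $\mathbf{b}$ is the binary representation of $k + 2^{t_g}$, and to $0$ for all other indices $k$. Hence in the sum $\sum_k g_k(\mathbf{x}) A_{g,k}(\mathbf{b})$ only the $j$-th term survives, giving $g_j(\mathbf{x})$. (One should note that since $\mathbf{b}\in\{0,1\}^{t_g+1}$ is the binary representation of some number in $[2^{t_g}, 2^{t_g+1})$, it does correspond to a unique $j$ with $0 \le j \le 2^{t_g}-1 \ge \mathrm{fanin}(g)-1$, so this is well-defined.)

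For item 2, I would argue by induction on the depth of a gate $g$ in $C'$ that $f_{g'}(\mathbf{a},\mathbf{b}) \in \{-1,0,1\}$ for every Boolean $(\mathbf{a},\mathbf{b})$. The base case is an input gate, carrying a label in $\mathbf{x} \cup \{0,1\}$ (using constant-freeness of $C$ and that its input labels are from $\mathbf{x}\cup\{0,1\}$), which evaluates into $\{0,1\}$. For the inductive step there are two cases. If $g$ was a multiplication gate, it is left unchanged, so $f_{g'} = \prod_i f_{g_i'}$ is a product of values in $\{-1,0,1\}$, hence in $\{-1,0,1\}$. If $g$ was an addition gate, then $f_{g'}(\mathbf{a},\mathbf{b}) = \sum_k f_{g_k'}(\mathbf{a},\mathbf{b}) \cdot A_{g,k}(\mathbf{b}_g)$ where $\mathbf{b}_g$ is the restriction of $\mathbf{b}$ to the block $\mathbf{y}_g$. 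By \Cref{lem:addressing-fns-partial-eval}(1), the gadget values $A_{g,k}(\mathbf{b}_g)$ over $k = 0, \dots, \mathrm{fanin}(g)-1$ are all $0$ except possibly one, which is $1$ (it is exactly one when $\mathbf{b}_g$ encodes an index in range, and zero when $\mathbf{b}_g$ encodes an out-of-range index); so the sum collapses to a single term $f_{g_k'}(\mathbf{a},\mathbf{b})$ (or to $0$), which lies in $\{-1,0,1\}$ by the inductive hypothesis. This is the key mechanism — the addressing gadgets act as a hard multiplexer on Boolean inputs — and I expect it to be the conceptual heart of the argument, though technically routine once set up.

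For item 3, I would define $\mathbf{b} \in \mathbb{F}^{|\mathbf{y}|}$ blockwise: on each block $\mathbf{y}_g = \{y_{g,0},\dots,y_{g,t_g}\}$ set $y_{g,i} = \tfrac12$ for $i < t_g$ and $y_{g,t_g} = 2^{t_g}$. Then by \Cref{lem:addressing-fns-partial-eval}(2), $A_{g,k}(\mathbf{b}_g) = 1$ for every $k$. Therefore each modified addition gate computes $g'(\mathbf{x},\mathbf{b}_g) = \sum_k g_k(\mathbf{x}) \cdot 1 = \sum_k g_k(\mathbf{x}) = g(\mathbf{x})$, the original sum; multiplication gates are untouched. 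By induction on depth, every gate of $C'$ under this substitution computes the same polynomial as the corresponding gate of $C$, so $f'(\mathbf{x},\mathbf{b}) = f(\mathbf{x})$. For item 4, the size bound follows from the construction: each addition gate $g$ of fan-in $r \le s$ is replaced by $r$ products, each multiplying $g_k$ by a gadget $A_{g,k}$ on $t_g + 1 = O(\log r) = O(\log s)$ variables, contributing $O(r \log s)$ wires; summing over all (at most $s$) gates gives $O(s \log s)$ total, and likewise $|\mathbf{y}| = O(s\log s)$. For depth: each original addition gate $g = \sum_k g_k$ becomes, in the natural layered implementation, $\sum_k (g_k \cdot A_{g,k})$, where each gadget $A_{g,k}$ is itself a product of $t_g+1$ linear (indeed degree-$\le 1$) terms — a bounded-depth subformula (depth $2$ after the product-of-linear-forms implementation, or we can absorb the $(1-y_i)$ subtraction into one extra addition layer). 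So each original gate is replaced by $O(1)$ layers, and since $C$ alternates addition and multiplication, the depth grows from $\Delta$ to at most $2\Delta + 2$. I would be slightly careful here to state exactly how $A_{n,j}$ is implemented as a formula (a product gate over inputs $y_i$ and $(1-y_i)$, each $1-y_i$ being a single addition gate) to justify the constant $2$; this bookkeeping is the only place where one must be precise rather than just invoke the previous lemmas.
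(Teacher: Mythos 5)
Your proposal takes essentially the same route as the paper's (very terse) proof: item~1 by the multiplexer property of the gadget, item~2 by induction on gate depth, item~3 by the $\left(\tfrac12,\ldots,\tfrac12, 2^{t_g}\right)$ substitution, and item~4 by the wire count and the depth-$2$ implementation of each $A_{g,j}$. You have correctly identified that the induction for item~2 hinges on the fact that on Boolean $\mathbf{b}_g$ the gadget values $A_{g,0}(\mathbf{b}_g),\ldots,A_{g,\mathrm{fanin}(g)-1}(\mathbf{b}_g)$ are all zero with at most one exception equal to $1$, so an addition gate of $C'$ collapses to a single summand (or to $0$).

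One small slip: your base case for item~2 asserts that the input gates carry labels from $\mathbf{x}\cup\{0,1\}$. The lemma only assumes $C$ is \emph{constant-free}, which per the paper's footnote allows input labels from $\mathbf{x}\cup\{-1,0,1\}$. That is precisely why item~2 concludes $f'(\mathbf{a},\mathbf{b})\in\{-1,0,1\}$ rather than $\{0,1\}$; the stricter $\{0,1\}$ conclusion is the content of the separate Remark that \emph{does} additionally assume labels in $\mathbf{x}\cup\{0,1\}$. Your inductive step is unaffected (products and the multiplexer collapse both preserve membership in $\{-1,0,1\}$), so the fix is simply to state the base case as evaluating into $\{-1,0,1\}$.
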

\begin{proof}
    \begin{enumerate}
        \item This follows directly from \autoref{lem:addressing-fns-partial-eval}.
        \item This follows directly from part 1 and induction on the circuit layers. The base case follows from the assumption that $C$ is constant-free. 
        \item For each addition gate, $g'$, we let 
        $$\mathbf{b}_{g} = \left(\frac{1}{2}, \frac{1}{2},\ldots,\frac{1}{2},2^{t_g}\right).$$
        Then by \autoref{lem:addressing-fns-partial-eval} every addition gate, $g'$, under this evaluation becomes
        $$ g'(\mathbf{x},\mathbf{b}_{g}) = \sum_{j=0}^{\operatorname{fanin}(g)-1} g_j(\mathbf{x})\cdot A_{g,j}(\mathbf{b}_{g}) = \sum_{j=0}^{\operatorname{fanin}(g)-1} g_j(\mathbf{x}) = g(\mathbf{x}).$$
        \item Let $n_g$ denote the fanin of gate $g$ (addition or multiplication). Since $A_{g,j}$ can be computed by formula of size $\mathcal{O}(\log n_g)$ and as $ n_g \leq s$  for any g, we get that $C'$ has a formula size at most $\mathcal{O}(s \log s)$.
        
        For each addition gate, $g$, we need to add a multiplication layer to multiply all the $g_j \cdot A_{g,j}$. Since $A_{g,j}$ has depth 2, we get a depth of at most $2\Delta + 2$.
    \end{enumerate}
\end{proof}

Only part 3 of \autoref{lem:address-fns-application} requires $\mathbb{F}$ to be of characteristic $p \neq 2$. The rest of the statement holds true over \emph{any} field.

\begin{remark}\label{rmk:01-out-only}
    If the input gates of $C(\mathbf{x})$ carry labels from $\mathbf{x} \cup \{0,1\}$, then $f'(\mathbf{a}, \mathbf{b}) \in \{0,1\}$ for any choice of $(\mathbf{a}, \mathbf{b}) \in \{0,1\}^{|\mathbf{x}| + |\mathbf{y}|}$. In particular, this applies to the formula of $Q_\Delta$ from \autoref{thm:lst-hierarchy}.
\end{remark}

\subsection{Proof}
\begin{proof}[Proof of \Cref{thm:main-constant-depth}]
Let $\mathbb{F}$ be a field of characteristic zero and fix any depth $\Delta \in \mathbb{N}$. Let $Q_{\Delta} \in \F[x_{1},\ldots,x_{n}]$ be the polynomial from \cite[Theorem 2.1]{LST} satisfying the following conditions:
\begin{enumerate}
    \item There is a constant-free algebraic formula computing $Q_{\Delta}$ in depth $\Delta$ and size $s$. Denote this  by $C(x_{1},\ldots,x_{n})$.
    \item For every algebraic formula computing $Q_{\Delta}$ with depth $\Delta/2-1$ requires size $s^{\omega(1)}$.
\end{enumerate}

\paragraph*{Unsatisfiable instance.} Let $C(\x)$ be the formula of depth $\Delta$ computing $Q_\Delta(\x)$. Let $C'(\mathbf{x},\mathbf{y})$ denote the formula we get after applying the process mentioned in \autoref{lem:address-fns-application} to $C(\x)$, 
and let $f_{\Delta}'(\mathbf{x}, \mathbf{y})$ denote the polynomial computed by $C'(\mathbf{x},\mathbf{y})$. By \autoref{lem:address-fns-application}, $C'$ is of depth $2 \Delta+2$ and size $\mathcal{O}(s\log s)$. 
Define the polynomial $f_{\Delta}$ as
\begin{align*}
    f_{\Delta}(\mathbf{x},\mathbf{y}) \; := \; f'_{\Delta}(\mathbf{x},\mathbf{y}) - 2.
\end{align*}

    Let $(\mathbf{a}, \mathbf{b}) \in \{0,1\}^{|\mathbf{x}|+|\mathbf{y}|}$ be any Boolean assignment to the variables in $f'_\Delta$. By \autoref{lem:address-fns-application} (and \autoref{rmk:01-out-only}), we get that $f'_\Delta(\mathbf{a},\mathbf{b}) \in \{0,1\}$, so $f_\Delta(\mathbf{a}, \mathbf{b}) \in \{-2,-1\}$. Hence, $f_\Delta =0 $ is not satisfiable over the Boolean cube. 
    Let $C_\Delta$ denote the formula for $f_\Delta$ of depth $2\Delta + 2$ and size $s_\Delta = O(s \log s)$. 

Now we show that $f_{\Delta}$ satisfies the two properties stated in \Cref{thm:main-constant-depth}. That is, we prove upper and lower bounds on the complexity of the refutation of $f_\Delta$. 

\subsubsection{Upper bound on the refutation of $f_\Delta$}
This section is dedicated to the proof of the following lemma.

\begin{lemma}[Upper Bound]\label{lemma:constant-depth-IPS-upper-bound}
   Let $f_{\Delta} \in \F[\mathbf{x},\mathbf{y}]$ be as defined above.  There exists a constant-depth $\IPS$ refutation of depth $\Delta' = 4\Delta + 6$ and size $s' \leq 100 s_\Delta^5$.
\end{lemma}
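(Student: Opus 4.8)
The plan is to construct an explicit $\IPS$ refutation of $f_\Delta = f'_\Delta(\x,\y) - 2$ directly from the formula $C_\Delta$ computing $f'_\Delta$, exploiting the key structural property guaranteed by \autoref{lem:address-fns-application}(2): every gate of $C_\Delta$ evaluates to $\{0,1\}$ (in fact $\{0,1\}$ by \autoref{rmk:01-out-only}, since $Q_\Delta$'s formula carries labels from $\x \cup \{0,1\}$) on every Boolean input. The starting observation, mirroring the introduction's toy example $2-z$ with refutation $(1+z)/2$, is that once we know $f'_\Delta - (f'_\Delta)^2$ lies in the ideal generated by the Boolean axioms $\{x_i^2 - x_i\}_i \cup \{y_j^2 - y_j\}_j$, we can write $f_\Delta \cdot \left(\tfrac{-(1+f'_\Delta)}{2}\right) = 2 - f'_\Delta\cdot\tfrac{1+f'_\Delta}{2}\cdot(\text{sign adjustments}) \equiv 1$ modulo that ideal — i.e., a functional-style refutation of $f_\Delta$ of size comparable to $C_\Delta$, provided we can certify the idealness of $f'_\Delta - (f'_\Delta)^2$ cheaply and in small depth.

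So the technical heart is a lemma proved by induction on the formula $C_\Delta$: for every gate $g$ with polynomial $f_g$, the polynomial $f_g^2 - f_g$ lies in the ideal generated by the Boolean axioms, with an $\IPS$-style certificate (a small algebraic circuit expressing $f_g^2 - f_g$ as $\sum_i P_i \cdot (x_i^2-x_i) + \sum_j R_j\cdot(y_j^2-y_j)$) of size $O(\mathrm{size}(g)^{c})$ and depth $O(\mathrm{depth}(g))$ for small constants. The base case is leaves: a variable $x_i$ has $x_i^2 - x_i$ as its own certificate, and constants $0,1$ satisfy $0^2-0 = 1^2-1 = 0$ trivially. For the inductive step I would handle multiplication gates $g = \prod_k g_k$ by writing $f_g^2 - f_g = f_g(f_g - 1)$ and telescoping: using $f_{g_k}^2 \equiv f_{g_k}$ from the inductive hypothesis one shows $\prod_k f_{g_k}^2 \equiv \prod_k f_{g_k}$ via a chain of substitutions, each step contributing a multiple of some $(f_{g_k}^2 - f_{g_k})$, hence (by IH) a combination of Boolean axioms, with the circuit-size and depth overhead being only additive/multiplicative in a bounded way. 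For the addition-with-gadget gates $g' = \sum_j g_j\cdot A_{g,j}(\y_g)$ the argument is cleaner: because the $A_{g,j}$ are $0/1$-valued and at most one is nonzero on any Boolean point (they partition by the address bits), one has the polynomial identity $(g')^2 - g' \equiv \sum_j (A_{g,j}^2 - A_{g,j})(\cdots) + \sum_j A_{g,j}(g_j^2 - g_j) + (\text{cross terms } A_{g,j}A_{g,k}, j\ne k)$, where $A_{g,j}^2 - A_{g,j}$ and $A_{g,j}A_{g,k}$ are themselves in the ideal generated by $\{y_{g,i}^2 - y_{g,i}\}_i$ by a short direct computation (they vanish on the Boolean cube and are multilinear in $O(\log \mathrm{fanin})$ variables, so a cheap certificate exists), and $g_j^2 - g_j$ is handled by IH. Summing over the $O(s)$ gates and tracking that each gate contributes a certificate of size $\poly(s_\Delta)$ and depth $O(\Delta)$, one assembles a global refutation; the claimed bounds $\Delta' = 4\Delta+6$ and $s' \le 100 s_\Delta^5$ should come out of: $C_\Delta$ has depth $2\Delta+2$, squaring it doubles to $\sim 4\Delta+4$, the gadget/ideal-certificate bookkeeping adds a constant, and the size blows up by at most a fixed polynomial power (the "$5$" absorbing the squaring, the summation over gates, and the multiplication by $(1+f'_\Delta)/2$).

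I expect the main obstacle to be the inductive bookkeeping for the \emph{size and depth} of the certificate at multiplication gates: naively telescoping $\prod_k f_{g_k}^2 \equiv \prod_k f_{g_k}$ term-by-term could blow up the degree or size multiplicatively at each of the $\Delta$ levels, which would destroy the polynomial size bound. The fix is to be careful that each telescoping step reuses the already-built certificates as sub-circuits (sharing, since $\IPS$ allows circuits not formulas — though here everything is formula-like, one can still reuse by recomputation with only polynomial overhead because depth is constant), and to phrase the induction so that the certificate size is bounded by, say, $O(\mathrm{size}(g)^2)$ rather than something that compounds with depth. A secondary subtlety is the characteristic-zero hypothesis: it is needed to divide by $2$ in forming $(1+f'_\Delta)/2$ and in the gadget evaluations (\autoref{lem:addressing-fns-partial-eval} already assumed $\mathrm{char} \ne 2$), so I would note explicitly where $1/2$ enters. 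Finally I would double-check that the two $\IPS$ conditions $C(\x,\y,\mathbf 0,\mathbf 0)=0$ and $C(\x,\y,f_\Delta, \x^2-\x, \y^2-\y) = 1$ are literally met by the assembled circuit — the first is automatic since every axiom-placeholder appears as a factor, and the second is exactly the content of the idealness lemma combined with the $(1+f'_\Delta)/2$ trick.
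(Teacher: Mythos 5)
Your proposal follows essentially the same route as the paper: the inductive gate-by-gate certificate writing $f_g^2 - f_g$ as a combination of Boolean axioms (\Cref{lem:const-depth-induction}), the telescoping decomposition at product gates, the observation that $A_{g,\ell}^2 - A_{g,\ell}$ and $A_{g,\ell}A_{g,\ell'}$ for $\ell\neq\ell'$ lie in the ideal of the $\mathbf{y}_g$ Boolean axioms (\Cref{lem:square-address,lem:prod-address}), and the final assembly by scaling $(f'_\Delta+1)\cdot f_\Delta$ by $-1/2$ are all exactly the paper's argument. The only correction is the inductive size exponent: $O(\mathrm{size}(g)^2)$ is too tight to absorb the telescoping prefactors, which is why the paper carries $100\cdot\mathrm{size}(g)^4$, using $\sum_\ell s_{g,\ell}^4 \leq (s_g-r)^4 \leq s_g^4 - s_g^3/2$ to leave room for the $O(r\cdot s_g)$ overhead at each gate.
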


In order to prove the lemma, we use induction on the structure of  $C_\Delta$. We prove the following inductive lemma, which implies~\Cref{lemma:constant-depth-IPS-upper-bound}. 

\begin{lemma}
    \label{lem:const-depth-induction}
    Let $g$ be any gate in $C_\Delta$. Then,
    \[g^2-g = \sum_{i = 1}^n E_{g,i} \cdot (x_i^2 - x_i) + \sum_{j=1}^{m} F_{g,j}\cdot  (y_{j}^2-y_{j}),\]
    where $m$ is the number of $\y$ variables and 
    \begin{itemize}
        \item $\mathrm{size}(E_{g,i})$, $\mathrm{size}(F_{g,i})$ is at most $100 \cdot (\mathrm{size}(g))^4$, 
        \item and $\mathrm{depth}(E_{g,i})$, $\mathrm{depth}(F_{g,i})$ is at most $2 \cdot \mathrm{depth}(g)$.
    \end{itemize}
\end{lemma}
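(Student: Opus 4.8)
The plan is to induct on $\mathrm{depth}(g)$, using the fact (from \autoref{lem:address-fns-application} together with \autoref{rmk:01-out-only}) that every gate $g$ of $C_\Delta$ takes values in $\{0,1\}$ on the Boolean cube; equivalently, $g^2 - g$ vanishes on $\{0,1\}^{|\mathbf{x}|+|\mathbf{y}|}$ and therefore lies in the ideal generated by the Boolean axioms $\{x_i^2-x_i\}_i \cup \{y_j^2 - y_j\}_j$. The content of the lemma is not the membership itself but the quantitative control on the size and depth of the ideal-membership coefficients $E_{g,i}, F_{g,j}$. For the \textbf{base case}, $g$ is an input gate carrying a label in $\mathbf{x}\cup\mathbf{y}\cup\{0,1\}$; if $g = x_i$ then $g^2-g = 1\cdot(x_i^2-x_i)$, similarly for $y_j$, and if $g$ is a constant $0$ or $1$ then $g^2-g = 0$ and all coefficients are $0$. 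In every case the size and depth bounds hold trivially.

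For the \textbf{inductive step}, recall that $C_\Delta$ is layered with alternating $+$ and $\times$ gates. Suppose first $g = g_1 + \cdots + g_t$ is an addition gate. Then $g^2 - g = \sum_{k} (g_k^2 - g_k) + \sum_{k \neq \ell} g_k g_\ell - (t-1)\sum_k g_k \cdot (\text{correction})$; more carefully, write $g^2 - g = \sum_k (g_k^2 - g_k) + \big(\sum_{k<\ell} 2 g_k g_\ell - (t-1)\sum_k g_k\big)$, and observe that the parenthesized remainder also vanishes on the cube (since both $g$ and each $g_k$ do), so it too is in the Boolean ideal. Rather than recursing on that remainder directly, the cleaner route — and the one I expect the authors take — is to exploit the addressing-gadget structure: at each $+$ gate of the \emph{modified} formula, $g = \sum_j g_j'\cdot A_{g,j}(\mathbf{y}_g)$ where the $A_{g,j}$ are $0/1$-valued and "orthogonal" on the cube (at most one is nonzero). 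This makes $g^2 - g = \sum_j (g_j'^2 - g_j') A_{g,j}^2 + (\text{terms involving } A_{g,j}A_{g,j'} \text{ for } j\neq j', \text{ and } A_{g,j}^2 - A_{g,j})$, where the cross terms $A_{g,j}A_{g,j'}$ are divisible by some $y_{g,i}^2 - y_{g,i}$ or $(1-y_{g,i})$-type factor because the binary addresses of $j$ and $j'$ differ in a coordinate, hence these terms are visibly in the ideal with cheap coefficients, and $A_{g,j}^2 - A_{g,j}$ is handled by a direct sub-argument (each $A$ is a product of $t_g+1$ terms of the form $y$ or $1-y$, and $\prod u_i^2 - \prod u_i$ telescopes into the ideal). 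One then folds in the inductive hypothesis applied to each child $g_j'$. For a multiplication gate $g = g_1 \cdots g_t$, use the telescoping identity $g_1^2\cdots g_t^2 - g_1\cdots g_t = \sum_{k=1}^{t}\big(\prod_{\ell < k} g_\ell^2\big)(g_k^2 - g_k)\big(\prod_{\ell>k} g_\ell\big)$, substitute the inductive hypothesis for each $g_k^2 - g_k$, and collect.

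The \textbf{main obstacle} is the bookkeeping on size and depth: each inductive step multiplies the coefficients by polynomials of size $O(\mathrm{size}(g))$ and adds $O(1)$ to the depth, and one must verify that the constants in the bounds $100\cdot(\mathrm{size}(g))^4$ and $2\cdot\mathrm{depth}(g)$ survive the recursion — in particular that summing $t \le \mathrm{fanin}(g)$ contributions and multiplying by factors like $\prod_{\ell<k} g_\ell^2$ (whose size is at most $2\,\mathrm{size}(g)$ and depth at most $\mathrm{depth}(g)$) keeps the size at most $100\,(\mathrm{size}(g))^4$ and the depth at most $2\,\mathrm{depth}(g)$; the exponent $4$ (rather than something larger) should come from the fact that along any root-to-leaf path the size bound is applied only additively across siblings, not multiplicatively across levels, together with the alternating-layer structure. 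Once \autoref{lem:const-depth-induction} is established, applying it at the output gate $g = C_\Delta$ gives $f_\Delta'^2 - f_\Delta' = \sum_i E_{\cdot,i}(x_i^2-x_i) + \sum_j F_{\cdot,j}(y_j^2-y_j)$ with the stated size/depth; combining this with the elementary identity $(f_\Delta' - 2)\cdot\tfrac{1}{2}(f_\Delta' - 1) - 1 = \tfrac12(f_\Delta'^2 - f_\Delta') - 1 + \text{(Boolean-ideal terms already accounted for)}$ — i.e. $\tfrac12(f'^2-f') = \tfrac12\big((f'-2)\cdot\tfrac{f'-1}{1}\big)$ type rearrangement so that $\tfrac{f_\Delta'-1}{2}$ times $f_\Delta = f_\Delta' - 2$ equals $1$ modulo the Boolean ideal — yields the claimed $\IPS$ refutation of depth $4\Delta+6$ and size $\le 100\,s_\Delta^5$, proving \autoref{lemma:constant-depth-IPS-upper-bound}.
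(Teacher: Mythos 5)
Your proposal follows essentially the same approach as the paper: induction on the structure of $C_\Delta$, the telescoping identity for multiplication gates, and for addition gates the decomposition $g^2-g = \sum_\ell (g_\ell^2-g_\ell)A_{g,\ell}^2 + \sum_\ell g_\ell(A_{g,\ell}^2 - A_{g,\ell}) + \sum_{\ell\neq\ell'} g_\ell g_{\ell'} A_{g,\ell} A_{g,\ell'}$, together with the two side observations that $A_{g,\ell}^2 - A_{g,\ell}$ telescopes into the Boolean ideal and that $A_{g,\ell}A_{g,\ell'}$ ($\ell\neq\ell'$) is divisible by some $y_j^2-y_j$ — exactly the paper's \Cref{lem:square-address} and \Cref{lem:prod-address}. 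You correctly identify the remaining work as the size/depth bookkeeping but do not carry it out, which is the only part of the argument not present in your sketch.
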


We assume \Cref{lem:const-depth-induction} and prove \Cref{lemma:constant-depth-IPS-upper-bound}. We know that $f_{\Delta}'$ can be computed by a formula of depth $2\Delta+2$ and size $s_\Delta$. Using \Cref{lem:const-depth-induction}, we know that there exists polynomials $E_{j}$ and $F_{j}$ such that
\begin{align*}
    (f_{\Delta}')^{2} - f_{\Delta}' \; = \; \sum_{j=1}^{n} E_{j}(\mathbf{x},\mathbf{y}) \cdot (x_{j}^{2}-x_{j}) \; + \; \sum_{j=1}^{m} F_{j}(\mathbf{x},\mathbf{y}) \cdot (y_{j}^{2}-y_{j}),
\end{align*}
where
\begin{itemize}
    \item For every $j$, the polynomial $E_{j}$ can be computed by a formula of depth $2(2\Delta+2)$ and size $100 \cdot s_\Delta^4$.
    \item For every $j$, the polynomial $F_{j}$ can be computed by a formula of depth $2(2\Delta+2)$ and size $100 \cdot s_\Delta^4$.
\end{itemize}
As $f_{\Delta} = f_{\Delta}' - 2$, we get,
\begin{align*}
    \dfrac{-1}{2} \paren{ (f_{\Delta}'(\mathbf{x},\mathbf{y}) + 1) \cdot f_{\Delta}(\mathbf{x},\mathbf{y}) \; + \; \sum_{j=1}^{n} E_{j}(\mathbf{x},\mathbf{y}) \cdot (x_{j}^{2}-x_{j}) \; + \; \sum_{j=1}^{m} F_{j}(\mathbf{x},\mathbf{y}) \cdot (y_{j}^{2}-y_{j}) } \; = \; 1.
\end{align*}
Thus we have an IPS refutation for $f_{\Delta}$ of depth $2(2\Delta +2) + 2$ and size at most $s_\Delta + O(1) + (n \cdot (100 \cdot s_\Delta^4 + O(1)) + (m \cdot (100 \cdot s_\Delta^4 + O(1))) $. As both $m$ and $n$ are bounded by $s_\Delta$, we get that this quantity is bounded by $100 s_\Delta^5$. This completes the proof\footnote{Note that we get a bound on the size of the Nullstellensatz refutation, thus a bound on $\IPSLIN$ refutation.} of~\Cref{lemma:constant-depth-IPS-upper-bound}.  
In what follows, we prove \Cref{lem:const-depth-induction}.

\begin{proof}[Proof of~\Cref{lem:const-depth-induction}]
    We will prove the statement by induction on the structure of $C_\Delta$. The base case is trivial. 
    For the induction step, we have two cases: either $g$ is a $\times$ gate or a $+$ gate. 
    
    \underline{case 1:} $g = \prod_{\ell = 0}^{r-1} g_\ell $.\\
    In this case,
    \begin{align*}
        g^2 -g  = & (g_0 \cdot g_1 \cdot \ldots \cdot g_{r-1})^2 -  (g_0 \cdot g_1 \cdot \ldots \cdot g_{r-1}) \\   
         = & (g_0 \cdot g_1 \cdot \ldots \cdot g_{r-1})^2 - g_0(g_1 \cdot \ldots \cdot g_{r-1})^2 + g_0(g_1 \cdot \ldots \cdot g_{r-1})^2 - (g_0 \cdot g_1 \cdot \ldots \cdot g_{r-1}) \\
         = & (g_0^2 - g_0) (g_1 \cdot \ldots \cdot g_{r-1})^2 +  \left((g_1 \cdot \ldots \cdot g_{r-1})^2- (g_1 \cdot \ldots \cdot g_{r-1})\right)
    \end{align*}

    Using the same idea as above, i.e., a telescoping  summation, we get
    \begin{equation}
        \label{eq:mult-induction}
        g^2- g = \sum_{\ell =0 }^{r-1} ~ \prod_{t < \ell} g_t \cdot 
    \prod_{t > \ell} g_t^2 \cdot (g_\ell^2-g_\ell), 
    \end{equation}
    where $t$ takes values between $0$ and $r-1$ and  $\prod_{t < \ell} g_t = 1$ if $\ell = 0$ and $\prod_{t > \ell} g_t^2 = 1$ if $\ell = r-1$. 
    Let $H_\ell = \prod_{t < \ell} g_t \cdot 
    \prod_{t > \ell} g_t^2$. Then note that $H_\ell$ has size at most $2 \cdot \mathrm{size}(g)$. 
    By induction hypothesis for $g_\ell^2-g_\ell$, we get that 
    \begin{align*}
        g^2-g = & \sum_{i=1}^n \underbrace{\sum_{\ell =0}^{r-1} H_\ell \cdot E_{g_\ell, i}}_{E_{g_i}} \cdot  (x_i^2 -x_i)
         + \sum_{j=1}^m \underbrace{\sum_{\ell =0}^{r-1} H_\ell \cdot F_{g_\ell, i}}_{F_{g,j}} \cdot  (y_j^2 -y_j) \\
    \end{align*}

    The above expression now allows us to bound the size and depth of $E_{g,i}$ for every $i \in [n]$ and the size of $F_{g,j}$ for every $j \in [m]$ as follows. 
    
    \textbf{Size bound.} Before we start the analysis, we recall that we measure the size of a formula by the number of wires in the formula. We also note some bounds on our parameters. We will assume that $s_g> 1$. Let $s_g$ be the short-hand for $\mathrm{size}(g)$ and for a fixed $g$ let $s_{g, \ell}$ denote $\mathrm{size}(g_\ell)$ for $0 \leq \ell \leq r-1$. 
    \begin{equation}
        \label{eq:size-bounds}
        \sum_\ell s_{g,\ell} \leq (s_g-r) \text{ and } \sum_\ell s_{g,\ell}^4 \leq (s_g-r)^4
    \end{equation}
    Moreover, we have for any parameter $s > 1$, $(s-1)^4 \leq s^4 - s^3/2$. 
    
    Now we will bound the size of $E_{g,i}$. In order to so that, we have already seen that size of $H_\ell$ is at most $s_g$. We can also bound the size of $E_{g_\ell,i}$ inductively. Thus,

    $\mathrm{size ~of~ } E_{g,i} \leq \left(\sum_{\ell} 2 s_g + 100 s_{g,\ell}^4\right) + 3r \leq 3 \cdot s_g \cdot r + 3r + 100 \cdot (s_g-1)^4$. 

    A similar bound can be proved on the size of $F_{g,j}$.

    The first bound comes from applying the bounds for $H_\ell$, $E_{g_\ell, i}$ and counting the wires feeding into the outer summation. The second bound comes from using~\Cref{eq:size-bounds} and the fact that $s_g-r \leq s_g-1$. 

    $\mathrm{size ~of~ } E_{g,i} \leq  6\cdot s_g \cdot r + 100 \cdot s_g^4 - 100 s_g^3/2 \leq 100 \cdot s^4$.

    \textbf{Depth bound.} $\mathrm{depth}(E_{g,i}) \leq 2 \cdot (\mathrm{depth}(E_{g_\ell,i})) + 2 \leq 2 \cdot (\mathrm{depth}(g) - 1) + 2 \leq 2 \cdot \mathrm{depth}(g)$. The depth of $F_{g,i}$ can be bounded similarly. 

    \underline{case 2:} $g = \sum_{\ell =0}^{r-1} g_{\ell} \cdot A_{g,\ell}$. \\
    In order to prove this case, we will make use of a couple of simple lemmas. 

    \begin{lemma}
        \label{lem:square-address}
        Let $g = \sum_{\ell =0}^{r-1} g_{\ell} \cdot A_{g,\ell}$, then for any $\ell$, $(A_{g,\ell})^2 - (A_{g,\ell}) = \sum_{j = 0}^{t_g} C_{\ell,j} \cdot (y_{g,j}^2 - y_{g,j})$, where size$(C_{\ell,j}) \leq 6 \cdot r$.
    \end{lemma}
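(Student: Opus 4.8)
The goal is to show that for the addressing gadget $A_{g,\ell}(\mathbf{y}_g)$, the polynomial $A_{g,\ell}^2 - A_{g,\ell}$ lies in the ideal generated by the Boolean axioms $\{y_{g,j}^2 - y_{g,j}\}_{j=0}^{t_g}$, with explicit control on the size of the cofactors. The key structural fact is that $A_{g,\ell}$ is a product of $t_g+1$ distinct linear factors, each of the form $y_{g,i}$ or $(1-y_{g,i})$, one for each index $i \in \{0,\dots,t_g\}$. Since each such factor $L_i$ is idempotent modulo $y_{g,i}^2 - y_{g,i}$ (indeed $y_{g,i}^2 - y_{g,i}$ is exactly $-(y_{g,i})(1-y_{g,i})$, so both $y_{g,i}^2 - y_{g,i}$ and $(1-y_{g,i})^2-(1-y_{g,i})$ equal $\pm(y_{g,i}^2-y_{g,i})$), a telescoping argument identical in spirit to \eqref{eq:mult-induction} will do the job.

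\begin{proof}[Proof of \Cref{lem:square-address}]
Write $A_{g,\ell} = \prod_{i=0}^{t_g} L_i$, where for each $i$ we have $L_i = y_{g,i}$ if $i \in B_{g,1}(\ell)$ and $L_i = 1 - y_{g,i}$ if $i \in B_{g,0}(\ell)$ (recall every index $0 \le i \le t_g$ appears in exactly one of the two sets). In either case, a direct check gives $L_i^2 - L_i = \pm (y_{g,i}^2 - y_{g,i})$: if $L_i = y_{g,i}$ this is immediate, and if $L_i = 1-y_{g,i}$ then $L_i^2 - L_i = (1-y_{g,i})^2 - (1-y_{g,i}) = y_{g,i}^2 - y_{g,i}$. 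Now apply the same telescoping identity as in \eqref{eq:mult-induction} to the product $\prod_{i=0}^{t_g} L_i$:
\begin{equation*}
    A_{g,\ell}^2 - A_{g,\ell} \;=\; \sum_{i=0}^{t_g} \paren{\prod_{k < i} L_k} \cdot \paren{\prod_{k > i} L_k^2} \cdot (L_i^2 - L_i).
\end{equation*}
Substituting $L_i^2 - L_i = \pm(y_{g,i}^2 - y_{g,i})$ and setting $C_{\ell,i} := \pm \prod_{k<i} L_k \cdot \prod_{k>i} L_k^2$ yields the claimed membership $A_{g,\ell}^2 - A_{g,\ell} = \sum_{j=0}^{t_g} C_{\ell,j}\cdot(y_{g,j}^2 - y_{g,j})$.

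For the size bound, note that each $L_k$ is computable by a formula of size at most $2$ (a single variable, or one subtraction), so $\prod_{k<i} L_k \cdot \prod_{k>i} L_k^2$ uses at most $t_g$ such blocks on the left and $t_g$ squared blocks on the right, giving at most $2 t_g + 2 t_g + O(t_g) = O(t_g)$ wires. Since $t_g = t_{\operatorname{fanin}(g)-1} = \lceil \log_2(\operatorname{fanin}(g)) \rceil$ and $\operatorname{fanin}(g) = r$ in the notation of the lemma statement (the sum gate $g = \sum_{\ell=0}^{r-1} g_\ell \cdot A_{g,\ell}$ has fan-in $r$), we have $t_g \le \log_2 r \le r$, so each $C_{\ell,j}$ has size at most $6r$, as required.
\end{proof}

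**Remark on the main obstacle.** The only place requiring care is making the telescoping argument over a product of linear factors rather than over arbitrary gates $g_\ell$; but since each factor is a univariate idempotent-up-to-its-Boolean-axiom, this is cleaner than the multiplication-gate case (Case 1) already handled, and the identity $L_i^2 - L_i = \pm(y_{g,i}^2 - y_{g,i})$ makes the cofactors fall out immediately. The size accounting is routine once one observes $t_g = O(\log r) = O(r)$.
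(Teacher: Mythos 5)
Your proof is correct and takes essentially the same route as the paper: write $A_{g,\ell}$ as a product of $t_g+1$ linear factors $L_i \in \{y_{g,i},\, 1-y_{g,i}\}$, observe $L_i^2 - L_i = y_{g,i}^2 - y_{g,i}$ in both cases, and telescope exactly as in~\eqref{eq:mult-induction} to extract the cofactors $C_{\ell,j}$; the size bound then follows from $C_{\ell,j}$ being a product of $O(t_g)$ linear forms and $t_g \le r$. One small note: the $\pm$ sign you introduce is unnecessary --- as your own computation shows, $L_i^2 - L_i = y_{g,i}^2 - y_{g,i}$ holds with a $+$ sign regardless of whether $L_i = y_{g,i}$ or $L_i = 1 - y_{g,i}$.
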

    \begin{proof}
        Notice that for any addressing gadget attached to a gate $g$, it only uses variables from $\y_g$. For concreteness let the addressing gadget be given by $A_{g,\ell}(\y_g) = \prod_{t \in Y} y_{g,t} \times \prod_{t \in Y'} (1-y_{g,t}),$ for some partition of the indices of $\y_g$ into $Y$ and $Y'$. 

        Then, we get \[(A_{g,\ell})^2 - (A_{g,\ell}) = \left(\prod_{t \in Y} y_{g,t} \times \prod_{t \in Y'} (1-y_{g,t})\right)^2  - \left(\prod_{t \in Y} y_{g,t} \times \prod_{t \in Y'} (1-y_{g,t})\right)\] 

        Again using the idea of telescoping summations with respect to the $\mathbf{y}_{g}$ variables, we can show that 

        \[(A_{g,\ell})^2 - (A_{g,\ell}) = \sum_{j = 0}^{t_g} C_{\ell,j} \cdot (y_{g,j}^2 - y_{g,j})\]
        where, $C_{\ell, j}$ consists of monomials in $\y_g$ and it is a $\Pi\Sigma$ circuit in $\y_g$ variables. The input the product gates could be one of the following: either a variable appears as itself, or its square appears, or as $(1-y)$ or as $(1-y)^2$. The size of each linear factor can be bounded by $3\cdot t_g$ and hence, the overall size can be bounded by $6 \cdot t_g$. This is upper bounded by $6 \cdot r$. 
    \end{proof}
    
    \begin{lemma}
        \label{lem:prod-address}
        Let $g = \sum_{\ell =0}^{r-1} g_{\ell} \cdot A_{g,\ell}$, then for any $\ell \neq \ell'$, there exists a $j \in \{0,\ldots, t_g\}$ such that $A_{g, \ell} \times A_{g, \ell'} = C_{\ell, \ell', j} \cdot (y_j^2-y_j)$, where $\mathrm{size}(C_{\ell, \ell', j}) \leq 6 \cdot r$.
    \end{lemma}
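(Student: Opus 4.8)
The plan is to exploit the fact that two distinct addresses must differ in at least one bit, which forces the corresponding addressing gadgets to disagree on exactly that coordinate and hence makes $(y_{g,j}^2 - y_{g,j})$ a \emph{syntactic} factor of the product $A_{g,\ell}\cdot A_{g,\ell'}$; in contrast with \Cref{lem:square-address}, no telescoping is needed. Recall from \Cref{def:addressing_fns} that, writing $n = \operatorname{fanin}(g)-1$ so that $t_g = t_n$ and $A_{g,\ell} = A_{n,\ell}$, the gadget $A_{g,\ell}$ is the product over $i \in \{0,\dots,t_g\}$ of the factor $(1-y_{g,i})$ when the $i$-th bit of $\ell + 2^{t_g}$ is $0$ (i.e.\ $i \in B_{n,0}(\ell)$) and of the factor $y_{g,i}$ when that bit is $1$ (i.e.\ $i \in B_{n,1}(\ell)$), and that $B_{n,0}(\ell), B_{n,1}(\ell)$ partition $\{0,\dots,t_g\}$.

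First I would note that, since $\ell \neq \ell'$ and $0 \le \ell, \ell' < 2^{t_g}$, the length-$(t_g+1)$ binary representations of $\ell + 2^{t_g}$ and $\ell' + 2^{t_g}$ differ in some position $j \in \{0,\dots,t_g\}$; fix such a $j$, and assume without loss of generality that $j \in B_{n,0}(\ell) \cap B_{n,1}(\ell')$. Then $A_{g,\ell}$ carries the factor $(1 - y_{g,j})$ while $A_{g,\ell'}$ carries the factor $y_{g,j}$, so collecting the remaining coordinates gives
\[
  A_{g,\ell}\cdot A_{g,\ell'} \;=\; y_{g,j}\,(1-y_{g,j})\cdot \prod_{i \neq j} R_i \;=\; -\,(y_{g,j}^2 - y_{g,j})\cdot \prod_{i \neq j} R_i ,
\]
where for each $i \neq j$ the factor $R_i$ is one of $(1-y_{g,i})^2$, $y_{g,i}^2$, or $y_{g,i}(1-y_{g,i})$, according to the $i$-th bits of $\ell$ and $\ell'$. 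Setting $C_{\ell,\ell',j} := -\prod_{i \neq j} R_i$ then yields the claimed identity over any field.

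It remains to bound $\operatorname{size}(C_{\ell,\ell',j})$. The polynomial $C_{\ell,\ell',j}$ is a product of exactly $t_g$ factors, each a product of at most two of the degree-at-most-$1$ expressions $y_{g,i}$ and $1-y_{g,i}$, so each factor uses $O(1)$ wires; taking their product (and absorbing the sign into an input constant) gives a $\Pi\Sigma$-formula of size at most $6 t_g \le 6r$, using $t_g \le \lceil \log_2 r\rceil \le r$, exactly matching the bound in the statement and consistent with the estimate used in \Cref{lem:square-address}. I expect the only delicate point to be this size bookkeeping, together with the harmless ``without loss of generality'' on which of the two gadgets supplies $y_{g,j}$ versus $1-y_{g,j}$; there is no substantive obstacle here, and this lemma is in fact strictly simpler than the preceding \Cref{lem:square-address}.
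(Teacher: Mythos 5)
Your proof is correct and follows essentially the same route as the paper: identify a coordinate $j$ where the binary addresses of $\ell$ and $\ell'$ differ, observe that $y_{g,j}(1-y_{g,j}) = -(y_{g,j}^2-y_{g,j})$ is then a syntactic factor of $A_{g,\ell}\cdot A_{g,\ell'}$, and bound the size of the remaining $\Pi\Sigma$-cofactor by $6t_g \le 6r$. You are in fact slightly more careful than the paper's terse proof, which glosses over the sign and the precise wire-counting that you spell out.
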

    \begin{proof}
        Here, it is easy to observe that for $\ell \neq \ell'$, there must exist a variable $y_j$ such that either $y_j$ divides $A_{g,\ell}$ and $(1-y_j)$ divides $A_{g, \ell'}$ or vice-versa. Thus, $y_j \cdot (1-y_j)$ divides $A_{g,\ell} \times A_{g, \ell'}$. Thus, we get $A_{g, \ell} \times A_{g, \ell'} = C_{\ell, \ell', j} \cdot (y_j^2-y_j)$, where $C_{\ell, \ell',j}$ is simply the circuit consisting of a polynomial in $\y_g$ variables. 
        As in~\Cref{lem:square-address}, here again we get $\mathrm{size}(C_{\ell,\ell',j}) \leq 6 \cdot r$.          
    \end{proof}

    We will now resume the proof of case 2, i.e., the case when $g$ is a sum gate. We will again analyze $g^2-g$. 
    \begin{align*}
        g^2-g = & \left(\sum_{\ell =0}^{r-1} g_\ell \cdot A_{g, \ell}\right)^2 - \sum_{\ell =0}^{r-1} g_\ell \cdot A_{g, \ell} \\
        = & \sum_{\ell =0}^{r-1} \left(g_\ell \cdot A_{g, \ell}\right)^2 + \sum_{\ell \neq \ell'} g_\ell \cdot A_{g, \ell} \cdot g_{\ell'} \cdot A_{g, \ell'} - \sum_{\ell =0}^{r-1} g_\ell \cdot A_{g, \ell} \\
        = & \sum_{\ell =0}^{r-1} \left(g_\ell \cdot A_{g, \ell}\right)^2  - \sum_{\ell =0}^{r-1} g_\ell \cdot \left(A_{g, \ell}\right)^2 + \sum_{\ell =0}^{r-1} g_\ell \cdot \left(A_{g, \ell}\right)^2 - \sum_{\ell =0}^{r-1} g_\ell \cdot A_{g, \ell} + \sum_{\ell \neq \ell'} g_\ell \cdot g_{\ell'} \cdot A_{g, \ell} \cdot A_{g, \ell'},
    \end{align*}
    where we added and subtracted the same quantity (second and third summation). After rearranging, we get
    \begin{align*}
    g^2 -g = &  \sum_{\ell =0}^{r-1} \left(g_\ell^2-g_\ell \right)\cdot \left(A_{g, \ell}\right)^2 + \sum_{\ell =0}^{r-1} g_\ell \cdot \left(A_{g, \ell}^2 - A_{g, \ell} \right)+ \sum_{\ell \neq \ell'} g_\ell \cdot g_{\ell'} \cdot A_{g, \ell} \cdot A_{g, \ell'},
    \end{align*}
    We now apply induction on the first term and apply~\Cref{lem:square-address} and~\Cref{lem:prod-address} on the second and third terms, respectively. 
    \begin{align*}
        g^2-g 
        &= \sum_{i=1}^n \underbrace{\sum_{\ell=0}^{r-1} (A_{g, \ell})^2 \cdot  E_{g_\ell, i}}_{(E_{g,i})} \cdot (x_i^2-x_i) 
        + \sum_{j=1}^m \underbrace{\sum_{\ell=0}^{r-1} (A_{g, \ell})^2 \cdot F_{g_\ell, j}}_{(I)} \cdot (y_j^2 - y_j) \\
        &+ \sum_{j=0}^{t_g}  \underbrace{\sum_{\ell=0}^{r-1} C_{\ell,j} \cdot  g_\ell}_{(II)} \cdot (y_j^2-y_j) 
        + \underbrace{\sum_{\ell \neq \ell'} g_\ell \cdot g_{\ell'} \cdot C_{\ell, \ell', j}}_{(III)} \cdot (y_j^2-y_j) 
    \end{align*}

    Using this expression, we can now derive size and depth bounds on the refutation size. 

    \textbf{Size bound.} 

    We will now bound each term in the expression above. 

    \textbf{Bounding the size of $E_{g,i}$. }
    The size of $A_{g,\ell}$ can be bounded by $2r$. The size of each $E_{g_\ell, i}$ can be bounded inductively. Finally, the numbers of wires feeding into the outer summation can be bounded by $3r$. Thus we have, the following bound 
    \begin{align*}
        \mathrm{size}(E_{g,i}) \leq &
        \sum_{\ell =0}^{r-1} (2r + 100 s_{g,\ell}^4) + 3r \leq 2r^2 + 100 \sum_{\ell} s_{g,\ell}^4 + 3r  \leq 5r^2 + 100 \sum_{\ell} s_{g,\ell}^4 \\
        \leq & ~5r^2 + 100 (s-r)^4~~~~~ \text{Using~\Cref{eq:size-bounds}} \\
        \leq & 5r^2 + 100s^4 - 100 s^3/2 \leq 100 s^4. ~~~~~~\text{Using the bound on $s-r$ and $(s-1)^4$}
    \end{align*}
    
     \textbf{Bounding the size of $F_{g,i}$.} The bound on the size of $F_{g,i}$ can be obtained by analyzing terms (I), (II), and (III) above. Note that, the bound on term (I) is identical to the bound on $E_{g,i}$. So, we will have 
     
     \[\mathrm{ size~ of ~(I)} \leq 5r^2 + 100 \sum_{\ell} s_{g,\ell}^4 \quad \quad \quad (a). \]
     To bound the size of (II), we will use Lemma~\Cref{lem:square-address}. We get 
     \[\mathrm{size ~of ~(II)} \leq \left(\sum_{\ell} 6\cdot r + s_{g,\ell}\right) + 3 \cdot r \leq 9r^2 + \sum_{\ell} s_{g,\ell} \quad \quad \quad (b) \]
     Finally, we bound the size of (III). 
     \begin{align*}
         \mathrm{size ~of ~(III)} \leq & \left(\sum_{\ell \neq \ell'} 6\cdot r + s_{g,\ell} + s_{g,\ell'}\right) + 4 \cdot r^2 \\
          = & \sum_{\ell \neq \ell'} 6 \cdot r + \sum_{\ell \neq \ell'} s_{g,\ell} + \sum_{\ell \neq \ell'} s_{g,\ell'} + 4\cdot r^2 \\
          \leq & ~6r^3 + 2r \sum_{\ell \neq \ell'} s_{g,\ell} + 4r^2 \quad \quad \quad (c)
     \end{align*}
    Putting (a), (b), and (c) together and by combining terms we get that 

    $\mathrm{size}(F_{g,i}) \leq 25r^3 + 3rs + 100 \sum_{\ell} s_\ell^4 \leq 25r^3 + 3rs + 100 (s-1)^4 \leq  25r^3 + 3rs + 100 s^4 - 100s^3/2 \leq 100 s^4$

    \textbf{Depth bound.} The depth is bound is similar to the one we had in case 1 above. $\mathrm{depth}(E_{g,i}) \leq 2 \cdot (\mathrm{depth}(E_{g_\ell,i})) + 2 \leq 2 \cdot (\mathrm{depth}(g) - 1) + 2 \leq 2 \cdot \mathrm{depth}(g)$. The depth of $F_{g,i}$ can be bounded similarly. 
\end{proof}

\subsubsection{Lower bound on the refutation of $f_\Delta$}

In this section we focus on the lower bound of the size of the constant-depth $\IPS$ refutation of $f_\Delta$. Recall that $Q_\Delta(\mathbf{x})$ denotes the polynomial from \cite[Theorem 2.1]{LST} and let $s$ denote the size of the depth-$\Delta$ circuit computing $Q_\Delta$.
\begin{lemma}[Lower Bound]\label{lemma:constant-depth-IPS-lower-bound}
    Let $f_{\Delta}$ be as defined above. Every constant-depth $\IPS$ refutation of depth $\Delta''\leq \Delta/2-11$ requires size $s'' =s^{\omega(1)}$.
\end{lemma}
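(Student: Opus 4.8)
The plan is to use the \emph{multiples method} of \cite{FSTW21}, combined with the constant-depth factoring result of \cite{BKRRSS} and the hierarchy theorem of \cite{LST} (our \Cref{thm:lst-hierarchy}). The starting observation is that, by Part 3 of \Cref{lem:address-fns-application}, there is a partial assignment $\mathbf{y} \mapsto \mathbf{b}$ (namely $\mathbf{b}_g = (\tfrac12,\dots,\tfrac12,2^{t_g})$ at each addition gate) under which $f'_\Delta(\mathbf{x},\mathbf{b}) = Q_\Delta(\mathbf{x})$, and hence $f_\Delta(\mathbf{x},\mathbf{b}) = Q_\Delta(\mathbf{x}) - 2$. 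So if $C(\mathbf{x},\mathbf{y},u,\mathbf{z},w)$ is an $\IPS$ refutation of the system $\{f_\Delta,\ \{x_i^2-x_i\}_i,\ \{y_j^2-y_j\}_j\}$ of depth $\Delta''$ and size $s''$, then substituting $\mathbf{y} = \mathbf{b}$ yields a circuit of depth $\Delta''$ and size at most $s''$ that is an $\IPS$-style certificate for the system $\{Q_\Delta(\mathbf{x}) - 2,\ \{x_i^2-x_i\}_i\}$ over the $\mathbf{x}$-cube. (One has to be slightly careful: substituting values into $\mathbf{y}$ keeps the two defining identities of the refutation intact after also substituting $w = 0$ in the first identity, because those identities are polynomial identities in all variables.) Thus it suffices to lower-bound the size of any $\IPS$ refutation of $Q_\Delta(\mathbf{x}) - 2$ of small depth.

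Next I would run the multiples-method reduction. Write the refutation certificate evaluated at $u = Q_\Delta - 2$ as a univariate polynomial in $u$: from $C(\mathbf{x},u,\mathbf{z})$ with $C(\mathbf{x},0,\mathbf{0}) = 0$ we get
\[
C(\mathbf{x},\, Q_\Delta - 2,\, \mathbf{x}^2-\mathbf{x}) - C(\mathbf{x},\,0,\,\mathbf{x}^2-\mathbf{x}) \;=\; (Q_\Delta - 2)\cdot \sum_{i\ge 1} C_i(\mathbf{x},\mathbf{x}^2-\mathbf{x})\,(Q_\Delta-2)^{i-1},
\]
and the left side equals $1 - C(\mathbf{x},0,\mathbf{x}^2-\mathbf{x})$. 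Now reduce everything modulo the ideal $\langle x_i^2 - x_i\rangle$ — i.e.\ pass to the multilinear representative of each side — which does not increase (up to the standard polynomial factors, and at the cost of a constant additive increase in depth) the size of the constant-depth circuits involved, using that we are in fixed depth. The right-hand side becomes a nonzero multiple of the multilinear polynomial $\widetilde{Q_\Delta - 2}$, which is a nonzero polynomial (it is unsatisfiable, so certainly not identically zero, and in fact $Q_\Delta$ being multilinear it is just $Q_\Delta - 2$ itself on the relevant variables). The left-hand side $1 - C(\mathbf{x},0,\mathbf{x}^2-\mathbf{x})$ has a constant-depth circuit of size $\poly(s'')$ and depth $O(\Delta'')$. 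Hence some nonzero multiple of $Q_\Delta - 2$ has a constant-depth circuit of size $\poly(s'')$ and depth $O(\Delta'')$.

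Finally I would invoke the hardness of all multiples of $Q_\Delta$. Since $Q_\Delta - 2$ and $Q_\Delta$ differ by a constant, a multiple of one is a multiple of the other; so it is enough that every nonzero multiple of $Q_\Delta$ requires large size in depth $O(\Delta'')$. This is exactly what \cite{BKRRSS} gives: combined with Part 2 of \Cref{thm:lst-hierarchy} (no $s^{O(1)}$-size formula of depth $\Delta/2-1$ computes $Q_\Delta$), the constant-depth factoring/closure statement upgrades this to the assertion that every multiple of $Q_\Delta$ of depth $\Delta/2 - O(1)$ needs size $s^{\omega(1)}$. Matching constants, any $\IPS$ refutation of $f_\Delta$ of depth $\Delta'' \le \Delta/2 - 11$ would, after the substitution and the multiples reduction (which inflates depth by only an additive constant), yield a multiple of $Q_\Delta$ of depth at most $\Delta/2 - O(1)$ and size $\poly(s'')$, forcing $s'' = s^{\omega(1)}$. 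Re-expressing in terms of the depth parameter $\Gamma = \Theta(\Delta)$ of the theorem statement (recall $C_\Delta$ has depth $2\Delta+2$, so $\Gamma \approx 2\Delta$) gives the claimed "depth $\le \Gamma/10$" lower bound.

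The main obstacle I expect is bookkeeping the depth and size blow-ups carefully through the two reductions: (i) substituting $\mathbf{y} = \mathbf{b}$ is free in size and depth, but (ii) reducing modulo $\langle x_i^2 - x_i\rangle$ and re-extracting the univariate-in-$u$ coefficients must be done so that the resulting circuit for the multiple of $Q_\Delta$ stays within depth $\Delta/2 - O(1)$ — this is where the slack between $\Gamma/10$ and $\Delta/2$ is spent, and one must check the constants line up. A secondary point to verify is that the relevant multiple of $Q_\Delta$ produced on the right-hand side is genuinely nonzero (equivalently, that the $u$-degree-$\ge 1$ part of the certificate is not annihilated mod the Boolean ideal), which follows because the whole identity evaluates to $1$, not $0$.
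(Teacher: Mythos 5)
Your proof assembles the same ingredients as the paper — the multiples identity from FSTW Lemma~6.1, the BKRRSS constant-depth factoring closure, the substitution $\mathbf{y}\mapsto\mathbf{b}$ from \Cref{lem:address-fns-application} that recovers $Q_\Delta$, and the LST lower bound — in a slightly different order: you substitute $\mathbf{y}=\mathbf{b}$ first and then run the multiples reduction on $Q_\Delta-2$, whereas the paper first applies FSTW Lemma~6.1 to get $1 - C''(\mathbf{x},\mathbf{y},0,\mathbf{x}^2-\mathbf{x},\mathbf{y}^2-\mathbf{y}) = f_\Delta\cdot h$, then uses BKRRSS to extract a depth-$(\Delta''+10)$, $\poly(s'')$-size formula for $f_\Delta$, and only then substitutes. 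Both orderings work, since substitution is a ring homomorphism and hence preserves the factorization.

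One step should be deleted, however: the passage where you ``reduce everything modulo $\langle x_i^2 - x_i\rangle$'' is both unnecessary and not justified as written. The FSTW identity is an exact polynomial identity, not one holding merely over the Boolean cube, so $1 - C(\mathbf{x},0,\mathbf{x}^2-\mathbf{x})$ already is, literally, a nonzero polynomial multiple of $Q_\Delta - 2$ computed by a $\poly(s'')$-size, depth-$O(\Delta'')$ circuit, and BKRRSS applies directly. Your reduction step tacitly asserts (a) that multilinearizing a constant-depth circuit costs only polynomial size and an additive constant in depth, which you have not established and which is not used in the paper, and (b) that multilinearizing both sides of a product preserves the ``is a multiple of'' relation, which is false in general (although harmless here since $Q_\Delta$ is already multilinear, as you note). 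Your own next sentence in fact invokes the unreduced left-hand side's circuit anyway. Excise that paragraph and the argument matches the paper's.
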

\begin{proof}
    Let $C''((\mathbf{x},\mathbf{y}),u,\vbar, \z)$ be an $\IPS$ refutation of $f_\Delta(\mathbf{x},\mathbf{y})$. Let $s''$ and $\Delta''$ denote the circuit size and the depth of $C''$, respectively. Then we have the following facts:
    \begin{enumerate}
        \item By definition $C''((\mathbf{x},\mathbf{y}), {0},\mathbf{0}, \mathbf{0})=0$ and $C''((\mathbf{x},\mathbf{y}), f_\Delta,\x^2-\x, \y^2-\y)=1$ so using \cite[Lemma~6.1]{FSTW21} we have that 
        $$1 - C''((\mathbf{x},\mathbf{y}),0,\mathbf{x}^2 - \mathbf{x},\mathbf{y}^2 - \mathbf{y}) = f_\Delta \cdot h, $$
        for some polynomial $h(\mathbf{x},\mathbf{y})$.
        \item By \cite[Theorem~1.1]{BKRRSS}, from $1-C''((\mathbf{x},\mathbf{y}),0,\mathbf{x}^2 - \mathbf{x},\mathbf{y}^2 - \mathbf{y})$ we can extract an algebraic formula for $f_\Delta(\mathbf{x},\mathbf{y})$ whose size is $\operatorname{poly}(s'')$ and whose depth is at most $\Delta''+10$. 
        \item By construction of $f_\Delta(\mathbf{x},\mathbf{y})$ and by \autoref{lem:address-fns-application} there exists some $\mathbf{b} \in \F^{|\mathbf{y|}}$ such that 
        $$f_\Delta(\mathbf{x},\mathbf{b}) = Q_\Delta(\mathbf{x}),$$
        so any formula computing $f_\Delta$ at depth $\Delta''+10$ also computes $Q_\Delta$ at depth $\Delta''+10$.
        \item \cite[Theorem 2.1]{LST} (also stated in \autoref{thm:lst-hierarchy}) states that every algebraic formula computing $Q_\Delta$ at depth $\Delta/2-1$ requires size $s^{\omega(1)}$.
    \end{enumerate}
    Putting all this together, we get that if $\Delta''+10 \leq \Delta/2-1$ 
    then $C''((\mathbf{x},\mathbf{y}),u, \vbar, \z)$ requires size $s^{\omega(1)}$.
\end{proof}

\begin{remark}
    We use the recent factorization result of~\cite{BKRRSS} to obtain our lower bound. However, we would like to also note that this is not necessary in our case. Due to a results of~\cite{CKS-19}, it is known that small-degree factors of any polynomial computed by constant-depth circuits/formulas of polynomial size can also be computed by constant-depth circuits/formulas of polynomial size. The hard polynomial from~\cite{LST} as well as our addressing gadgets have small degree, our hard instance has small degree (i.e. logarithmic in the number of variables). Thus, all its factors are also of small degree. That is,~\cite{CKS-19} is applicable in our case. 

    We present the proof using~\cite{BKRRSS} as it will help adapt our proof strategy to other scenarios more directly if we obtain strong algebraic complexity lower bounds in the future. 
\end{remark}

We now use~\Cref{lemma:constant-depth-IPS-upper-bound}
 and~\Cref{lemma:constant-depth-IPS-lower-bound} to finish the proof of~\Cref{thm:main-constant-depth}. Note that the depth of the $\IPS$ refutation for $f_\Delta$ is $4\Delta+6$ and size is $\poly(s)$, whereas any circuit of depth less than $\Delta/2-11$ requires superpolynomial size. Thus, for $\Gamma = 4\Delta+6$, we get that $f_\Gamma$ has an $\IPS$ refutation of depth $\Gamma$ and any $\IPS$ refutation of depth less than $\Gamma/10$ requires superpolynomial size. This completes the proof of~\Cref{thm:main-constant-depth}.  
\end{proof}

\section{Multilinear separation theorem}
\label{sec:mult}
We start by proving a lemma that will be useful in the rest of the section. 

\begin{lemma}
    \label{lem:mult-main}
    Let $N \in \N$ and let $\x = \{x_1, \ldots, x_N\}$.  Let $f(\mathbf{x})$ be a multilinear polynomial such that $f(\mathbf{a}) \in \{0,1\}$ for any $a \in \{0,1\}^N$, then 

    \begin{enumerate}
        \item The following identity holds. \[(2-f(\mathbf{x})) \times \frac{1+f(\mathbf{x})}{2} \equiv 1 \mod{\mathbf{x}^2 - \x} \]
        \item $2-f(\mathbf{x}) =0 $ is unsatisfiable over the Boolean cube. 
        \item The unique multilinear function $g(\mathbf{x})$ obeying $g(\mathbf{x}) (2-f(\mathbf{x})) \equiv 1 \mod{\mathbf{x}^2 - \mathbf{x}}$ is ${\left(1+f(\mathsf{x})\right)}/{2}$. Thus, it has the same multilinear circuit size and depth as $2-f(\mbox{x})$.         
    \end{enumerate}
\end{lemma}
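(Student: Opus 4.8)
The plan is to prove the three parts in order; part~1 carries the only real content, and parts~2 and~3 follow quickly. Throughout, the hypothesis $\mathrm{char}(\F) \neq 2$ (the paper fixes $\mathrm{char}(\F)=0$) is what makes division by $2$ legitimate. For part~1, I would simply expand the product:
\[
\paren{2 - f(\x)} \cdot \frac{1 + f(\x)}{2} \;=\; \frac{1}{2}\paren{2 + 2f(\x) - f(\x) - f(\x)^2} \;=\; \frac{1}{2}\paren{2 + f(\x) - f(\x)^2}.
\]
The key observation is that $f(\x)^2 - f(\x)$ vanishes at every Boolean point: by hypothesis $f(\mathbf{a}) \in \{0,1\}$, so $f(\mathbf{a})^2 = f(\mathbf{a})$, for each $\mathbf{a} \in \Boo^N$. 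A polynomial vanishing on $\Boo^N$ lies in the ideal $\spacespanned{x_1^2 - x_1, \ldots, x_N^2 - x_N}$ (reduce it to its unique multilinear representative modulo this ideal and use that a multilinear polynomial is determined by its values on the cube, so the representative is $0$). Hence $f(\x)^2 \equiv f(\x) \mod{\x^2 - \x}$, and substituting into the display gives $\tfrac{1}{2}(2 + f(\x) - f(\x)) = 1$, the claimed identity.

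Part~2 is immediate: if $2 - f(\mathbf{a}) = 0$ for some $\mathbf{a} \in \Boo^N$, then $f(\mathbf{a}) = 2 \notin \{0,1\}$, a contradiction, so $2 - f(\x) = 0$ has no Boolean solution. For part~3, I would use that every polynomial is congruent modulo $\x^2 - \x$ to a unique multilinear polynomial and that a multilinear polynomial is determined by its restriction to $\Boo^N$. If $g(\x)$ is multilinear and $g(\x)\cdot(2 - f(\x)) \equiv 1 \mod{\x^2-\x}$, then evaluating at any $\mathbf{a} \in \Boo^N$ -- where $2 - f(\mathbf{a}) \in \{1,2\}$ is a unit -- forces $g(\mathbf{a}) = 1/(2 - f(\mathbf{a}))$, equal to $\tfrac{1}{2}$ when $f(\mathbf{a})=0$ and to $1$ when $f(\mathbf{a})=1$; since $\tfrac{1 + f(\mathbf{a})}{2}$ takes exactly these values, $g$ and $\tfrac{1 + f(\x)}{2}$ agree on $\Boo^N$ and hence are equal as multilinear polynomials, while $\tfrac{1+f(\x)}{2}$ itself satisfies the congruence by part~1. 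For the complexity claim, write $\tfrac{1 + f(\x)}{2} = \tfrac{3}{2} - \tfrac{1}{2}\paren{2 - f(\x)}$: a multilinear circuit or formula for $2 - f(\x)$ gives one for $\tfrac{1+f(\x)}{2}$ by scaling the output gate by $-\tfrac{1}{2}$ and adding the constant $\tfrac{3}{2}$, which changes size and depth by $O(1)$ and preserves multilinearity.

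I do not anticipate a genuine obstacle here: the statement is elementary. The only points requiring a little care are invoking the ideal-membership fact (a polynomial vanishing on $\Boo^N$ is generated by the $x_i^2 - x_i$) cleanly, and keeping track of the characteristic-$\neq 2$ assumption wherever we divide by $2$; the phrase ``same multilinear circuit size and depth'' in the statement should be read as ``up to an additive $O(1)$''.
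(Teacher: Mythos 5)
Your proposal is correct and follows essentially the same elementary route as the paper: the paper verifies part~1 by a two-case evaluation on the Boolean cube, which is the same observation as your explicit expansion $(2-f)(1+f)/2 = 1 + \tfrac{1}{2}(f - f^2)$ together with $f^2 \equiv f$ modulo $\ang{\x^2-\x}$, and parts~2 and~3 are handled identically. The one small thing you add that the paper leaves implicit is spelling out the linear relation $\tfrac{1+f}{2} = \tfrac{3}{2} - \tfrac{1}{2}(2-f)$ to justify the ``same circuit size and depth'' claim (correctly reading it as an $O(1)$ overhead), which is a welcome clarification.
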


\begin{proof}
    The first part of the lemma is a simple check. If $f(\mathbf{a}) = 1$ the left hand side evaluates to 1. Similarly, when $f(\mathbf{a}) = 0$ it again evaluates to 1.  The second part follows because we have assumed that $f(\x)$ only takes Boolean values. Finally, the third part follows immediately from the first part. 
\end{proof}

\subsection{Multilinear-$\NCo$ vs. multilinear-$\NCt$-IPS}
\label{sec:mnc1-mnc2}
In this section we prove~\Cref{thm:main-mnc1-mnc2}. 

We will construct a polynomial such that it is computable by a multilinear $\NCt$ circuit and such that it only takes Boolean values over the Boolean hypercube. This along with \Cref{lem:mult-main} will give us the desired separation.

\paragraph{Notation.} Let $[n] = \{1,\ldots, n\}$ and let $\ubar = \{u_1, \ldots, u_{2n}\}$ and $\vbar = \{v_{i,j,k}\}_{i,j,k\in [2n]}$. For $i,j \in [n]$ let $[i,j]$ denote the interval $\{k \mid i \leq k \text{ and } k \leq j\}$. Let $\ell([i,j])$ denote the length of the interval, i.e., $j-i +1$. When $j<i$, then $[i,j] = \emptyset$.

We first recall the hard polynomial defined by~\cite{RY08}, which is a simplification of the polynomial defined by~\cite{raz-ml-formulavscircuit}, which showed the first separation between multilinear formulas and circuits. 

The polynomial is defined inductively as follows. For $i \in [n]$, let $f_{i,i}(\ubar, \vbar) = 1$. If $\ell([i,j])$ is an even number more than $0$, then 
\[f_{i,j}(\ubar, \vbar) = (1 + u_i u_j) \cdot f_{i+1, j-1}(\ubar, \vbar) ~+~ \sum_{r \in [i+1,j-1]} v_{i,r,j} \cdot f_{i,r}(\ubar, \vbar) \cdot f_{r+1,j}(\ubar, \vbar) \]

Finally, the hard polynomial is $ F(\ubar, \vbar) = f_{1,2n}(\ubar, \vbar)$. They prove the following theorem about the polynomial.

\begin{lemma}[\cite{raz-ml-formulavscircuit}, \cite{RY08}]
    \label{lem:mnc1-mnc2}
    Let $n \in \N$ and let $\ubar = \{u_1, \ldots, u_{2n}\}$ and $\vbar = \{v_{i,j,k}\}_{i,j,k\in [2n]}$ be two sets of variables. Let $F(\ubar, \vbar)$ be the polynomial defined above. Then the following holds.
    \begin{enumerate}
        \item $F(\ubar, \vbar)$ can be computed by a multilinear circuit of size $\poly(n)$ and depth $O(\log^2 n)$.
        \item Any multilinear formula computing $F(\ubar,\vbar)$ must have size $n^{\Omega(\log n)}$. 
    \end{enumerate}
    
\end{lemma}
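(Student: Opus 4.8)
I would prove the two items separately. Item~2 reduces, via the theorem of \cite{raz-ml-formulavscircuit} that every full-rank polynomial requires multilinear formulas of size $n^{\Omega(\log n)}$, to showing that $F$ is full-rank with respect to the family of equipartitions used in that theorem; item~1 is a divide-and-conquer circuit. Both essentially reconstruct \cite{raz-ml-formulavscircuit}, in the simplified form of \cite{RY08}.

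\textbf{Item 1 (the circuit).} There are only $O(n^2)$ polynomials $f_{i,j}$, one per even-length sub-interval of $[1,2n]$, and the recursion writes each as a sum of $O(n)$ terms, each a product of at most two shorter $f$'s with a factor of degree at most $2$. Allocating one gate per $f_{i,j}$ and wiring by the recursion already gives a \emph{multilinear} circuit of size $O(n^3)$: in each product $f_{i,r}\cdot f_{r+1,j}$ and $(1+u_iu_j)\cdot f_{i+1,j-1}$ the factors use disjoint variables, so every gate stays multilinear. Its depth, however, is $\Theta(n)$ (witness the chain $f_{1,2n}\to f_{3,2n}\to f_{5,2n}\to\cdots$). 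To reach depth $O(\log^2 n)$ I would parallelize this interval dynamic program by the usual doubling: compute, for each interval, the polynomial that accounts for recursion trees of depth at most $2k$ from those of depth at most $k$ in $O(\log n)$ further circuit depth (a squaring step summing over all split points $r$); after $O(\log n)$ doublings every $f_{i,j}$, in particular $F$, is obtained with total size $\poly(n)$, depth $O(\log^2 n)$, and multilinearity preserved at every gate. This is the circuit of \cite{raz-ml-formulavscircuit}.

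\textbf{Item 2 (the formula lower bound).} Let $\Phi$ be a multilinear formula for $F(\ubar,\vbar)$. Substituting generic field constants $\gamma$ for $\vbar$ gives a multilinear formula $\Phi|_{\vbar=\gamma}$ of no larger size computing $F(\ubar,\gamma)\in\F[\ubar]$, so it is enough to show that $F(\ubar,\vbar)$, over the function field $\F(\vbar)$, is full-rank with respect to the family of equipartitions in \cite{raz-ml-formulavscircuit}'s theorem; then $F(\ubar,\gamma)$ is full-rank for generic $\gamma$ and the theorem bounds $|\Phi|$. The clean case is $\vbar=\mathbf{0}$: the recursion collapses to $F(\ubar,\mathbf{0})=\prod_{k=1}^{n}(1+u_k u_{2n+1-k})$, and for the equipartition $\y=\{u_1,\dots,u_n\}$, $\z=\{u_{n+1},\dots,u_{2n}\}$ the matrix $M_{\y,\z}$ of this polynomial (\Cref{def:raz-measure}) is, after reindexing rows and columns by subsets, the $2^n\times 2^n$ identity, hence has full rank $2^n$. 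The full statement --- which \cite{raz-ml-formulavscircuit}'s theorem needs for a \emph{random} equipartition from its distribution --- I would prove by induction on the interval recursion: since each $v_{i,r,j}$ is fresh, over $\F(\vbar)$ the rank of $M_{\y,\z}(f_{i,j})$ is at least that of each of its summands; if the partition splits $u_i$ from $u_j$ then $M_{\y,\z}\big((1+u_iu_j)f_{i+1,j-1}\big)$ is block-diagonal with two copies of $M_{\y,\z}(f_{i+1,j-1})$ and so has rank $2^{(j-i+1)/2}$ by induction, and otherwise one uses a split point $r$ for which $M_{\y,\z}(f_{i,r})$ and $M_{\y,\z}(f_{r+1,j})$ are both full-rank on their sub-intervals (their Kronecker product then having rank $2^{(j-i+1)/2}$).

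\textbf{Main obstacle.} The substantive step is this full-rank claim in its robust form: choosing the distribution on equipartitions and guaranteeing, for every interval, either that its endpoints are split or that a good split point $r$ exists (already the base case $f_{k,k+1}=1+u_ku_{k+1}$ forces the distribution to separate adjacent pairs, and more is needed higher up). This is exactly the technical core of \cite{RY08}'s simplification of \cite{raz-ml-formulavscircuit}, and I would reproduce their argument. Item~1 is routine once the recursion is seen as a parallelizable interval DP, modulo keeping the size, depth, and multilinearity bookkeeping of the squaring construction straight.
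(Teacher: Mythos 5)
The paper does not prove this lemma at all: it is cited as-is to \cite{raz-ml-formulavscircuit} and \cite{RY08}, and the only related argument in the paper is \Cref{lem:ranklb-mnc1}, which proves the full-rank property for the modified polynomial $P$. So there is no in-paper proof to compare against; your proposal is a reconstruction of the cited works, and for Item~2 it does mirror what the paper does for $P$: induct on interval length, split into the case where $u_i,u_j$ land on opposite sides (use the $(1+u_iu_j)f_{i+1,j-1}$ summand, giving a $2\times 2$ identity tensored with the inductive matrix) and the case where they land on the same side (choose, by a discrete intermediate-value argument, a split point $r$ of even offset so that both $[i,r]$ and $[r+1,j]$ are balanced, and take the Kronecker product of the two inductive matrices). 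One place you misjudge the difficulty: you name ``choosing the distribution on equipartitions'' as the main obstacle, but the entire point of the Raz--Yehudayoff gadget (the fresh $v_{i,r,j}$ coefficients plus the freedom to pick $r$) is that $F$ is full-rank for \emph{every} equipartition of $\ubar$ after a suitable $\vbar$-assignment, exactly as \Cref{lem:ranklb-mnc1} proves for $P$. No clever distribution is needed; the uniform one (or any one) suffices once you have the universal full-rank statement, and Raz's formula-rank upper bound for random equipartitions then closes the argument.

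Item~1 has a genuine gap. The ``parallelize the interval DP by the usual doubling'' step is not a valid depth reduction for this recursion. The recursion $f_{i,j}=(1+u_iu_j)f_{i+1,j-1}+\sum_r v_{i,r,j}f_{i,r}f_{r+1,j}$ is not a semiring matrix product over a one-dimensional state; after unrolling $k$ levels, the ``frontier'' consists of an arbitrary partition of $[i,j]$ into up to $2^k$ sub-intervals, not a single intermediate cut. A correct ``squaring'' step would therefore have to aggregate, in one $O(\log n)$-depth layer, a sum over exponentially many frontier shapes, and it is not clear how to do this in $\poly(n)$ size, let alone while preserving syntactic multilinearity. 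What actually works, and what the paper invokes for $P$ (``By using a depth reduction result of \cite{RY08}\ldots''), is to first build the obvious syntactically multilinear circuit of size $O(n^3)$ and depth $O(n)$ directly from the recursion, and then apply the general depth-reduction theorem of \cite{raz-ml-formulavscircuit,RY08}: every syntactically multilinear circuit of size $s$ and polynomial degree can be converted into a syntactically multilinear circuit of size $\poly(s)$ and depth $O(\log s\cdot\log d)=O(\log^2 n)$. You should cite that theorem rather than improvise a doubling argument specific to this DP.
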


\begin{remark}
    Note that $F(\ubar, \vbar)$ when evaluated over the Boolean hypercube can take large values. (In fact, $F(\mathbf{1}, \mathbf{1})$ grows exponentially with $n$.) For~\Cref{lem:mult-main} to be applicable, we need a polynomial that takes only Boolean values over the hypercube. 
\end{remark}

We construct such a polynomial by modifying $F(\ubar, \vbar)$. We design the polynomial using new set of gadget variables, which we call $\wbar$. These will serve a dual purpose, first, they will help us create addressing gadgets for the $+$ gates and they will assume the role of $\vbar$ variables in the definition of $F(\ubar, \vbar)$.

    Let $\ubar = \{u_1, \ldots, u_{2n}\}$. For an interval $[i,j]$, let $W^{[i,j]}$ denote the following set of variables. \[W^{[i,j]} = \{w^{[i,j]}_{\text{top}}\} \cup \{w^{[i,j]}_{\text{leaf}}\} \cup \tilde{W}^{[i,j]}, \]
    where $\tilde{W}^{[i,j]}$ consists of variables we will use for the addressing gadgets. Recall that for $n \in \mathbb{N}$, $t_n$ denotes the smallest integer such that $2^{t_n} > n$. Let $t_{i,j}$ be the shorthand for $t_{\ell([i+1,j-1])}$ and $n_{i,j}$ be a shorthand for $\ell([i+1,j-1])$.  Let $ \tilde{W}^{[i,j]}= \{w^{[i,j]}_{r} \mid 0 \leq r  \leq t_{i,j}\}$.

    Finally, we define $\wbar = \bigcup_{[i,j]} W^{[i,j]}$, where the union is over all intervals $[i,j]$, where $1 \leq i <j \leq n$ and $\ell([i,j])$ is even\footnote{We only use intervals of even length inductively.}. Here, the size of any set $W^{[i,j]}$ is $O(\log n)$ and hence we have $O(n^2 \log n)$-many $w$ variables. We will use $m(n)$ to denote the cardinality of $\wbar$ and $m$, when $n$ is clear from the context.

\begin{definition}
\label{def:mult-nc-hardpoly}
    The hard polynomial is defined inductively as follows:
    If $\ell([i,j]) = 0$ then $p_{[i,j]}(u,w) = 1$. If $\ell([i,j]) >0$ and even, then

    \[
    \begin{array}{cc}
    p_{i,j}(\ubar, \wbar)= & (1-w^{[i,j]}_{\text{top}}) \left((1-w^{[i,j]}_{\text{leaf}}) + w^{[i,j]}_{\text{leaf}} \cdot u_i \cdot u_j \right) \times p_{i+1,j-1} \\
     & + ~~w^{[i,j]}_{\text{top}} \times \left(\sum_{r \in [i+1, j-1]} g_r(\tilde{W}^{[i,j]}) \times p_{i,r} \cdot p_{r+1, j}\right),\\
    \end{array}
    \]
    $g_r$ is the addressing gadget, i.e. 
    \[g_r(\tilde{W}^{[i,j]}) = \prod_{t \in B_{n_{i,j},0}(r)} (1-w^{[i,j]}_{t}) \cdot \prod_{t \in B_{n_{i,j},1}(r)} w^{[i,j]}_{t}  \]
    Here, the sets $B_{n_{i,j},0}$ and $B_{n_{i,j},1}$ are defined as in~\Cref{def:addressing_fns}. 
    
    Finally, $P(\ubar, \wbar) = p_{1,2n}(\ubar, \wbar)$. 
\end{definition}

We will now prove that the polynomial $P(\ubar, \wbar)$ defined above retains the properties of the polynomial $F(\ubar, \vbar)$, that is, it is computed by multilinear circuits and it is hard for multilinear formulas. Additionally, we will show that the polynomial only takes Boolean values over the Boolean hypercube. Formally, we prove the following theorem. \\

\begin{theorem}
    \label{thm:mnc1-mnc2-hardpoly}
    Let $n \in \N$ and let $\ubar$ and $\wbar$ be as defined above. Also, let $P(\ubar, \wbar)$ be the polynomial from~\Cref{def:mult-nc-hardpoly}. Then, the following statements hold.
    \begin{enumerate}
        \item $P(\ubar, \wbar) \in \{0,1\}$ when evaluated over the Boolean hypercube. 
        \item $P(\ubar, \wbar)$ can be computed by a multilinear circuit of size $\poly(n)$ and depth $O(\log^2 n)$. 
        \item Any multilinear formula computing $P(\ubar, \wbar)$ must have size $n^{\Omega(\log n)}$.
    \end{enumerate}
\end{theorem}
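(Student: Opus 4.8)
\textbf{Proof plan for \Cref{thm:mnc1-mnc2-hardpoly}.}

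The plan is to establish the three parts essentially independently, with part 3 being the technical heart. For part 1, I would prove by induction on the interval length $\ell([i,j])$ that $p_{i,j}(\ubar,\wbar) \in \{0,1\}$ for every Boolean assignment. The base case $\ell([i,j])=0$ is immediate since $p_{i,j}=1$. For the inductive step, fix a Boolean assignment. The term $w^{[i,j]}_{\text{top}}$ acts as a selector: if $w^{[i,j]}_{\text{top}}=0$, the value is $\bigl((1-w^{[i,j]}_{\text{leaf}}) + w^{[i,j]}_{\text{leaf}}\cdot u_i u_j\bigr)\cdot p_{i+1,j-1}$, which is a product of Boolean values (using $u_i,u_j\in\{0,1\}$, the bracket is Boolean, and $p_{i+1,j-1}$ is Boolean by induction), hence Boolean; if $w^{[i,j]}_{\text{top}}=1$, the value is $\sum_r g_r(\tilde W^{[i,j]}) \cdot p_{i,r}\cdot p_{r+1,j}$, and by part 1 of \Cref{lem:addressing-fns-partial-eval} exactly one $g_r$ is $1$ and the rest are $0$ (the addressing gadgets share the variable set $\tilde W^{[i,j]}$ and behave like a multiplexer), so the sum collapses to a single Boolean term $p_{i,r}\cdot p_{r+1,j}$. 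Either way $p_{i,j}\in\{0,1\}$.

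For part 2, I would exhibit the multilinear circuit directly. Each $p_{i,j}$ can be computed from the $O(n)$ subpolynomials $p_{i+1,j-1}$ and $\{p_{i,r},p_{r+1,j}\}_r$ using a constant number of additional layers (the gadgets $g_r$ are products of $O(\log n)$ linear forms, computable in depth $O(\log n)$), and there are $O(n^2)$ intervals. Arranging these in a DAG by interval length gives total size $\poly(n)$. For the depth bound, I would observe that the recursion on interval length gives depth $O(\log n)$ many "levels" of the recursion, but the recursion for $p_{i,j}$ in terms of $p_{i,r},p_{r+1,j}$ splits the interval, so in fact the relevant recursion depth is $O(\log n)$ in the interval length, and at each level we spend $O(\log n)$ depth for the gadgets and the sum over $r$ (balanced binary tree of additions), giving total depth $O(\log^2 n)$, exactly matching the circuit for $F(\ubar,\vbar)$ in \Cref{lem:mnc1-mnc2}. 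Syntactic multilinearity must be checked: in each product $p_{i,r}\cdot p_{r+1,j}$ the two factors are on disjoint variable sets ($u_i,\ldots,u_r$ and the $W$-variables of sub-intervals inside $[i,r]$ versus $u_{r+1},\ldots,u_j$ and $W$-variables inside $[r+1,j]$, together with the disjoint fresh set $\tilde W^{[i,j]}$ used only for the gadget), so multiplication gates respect disjointness; this requires a careful bookkeeping of which $W$-sets appear in which $p_{i,j}$.

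For part 3, the strategy is a reduction to the lower bound in \Cref{lem:mnc1-mnc2}: I would show that there is a partial assignment to the $\wbar$ variables under which $P(\ubar,\wbar)$ specializes to (a polynomial computing the same function as) $F(\ubar,\vbar)$, and that this specialization does not increase multilinear formula size. Concretely, set $w^{[i,j]}_{\text{top}}$, $w^{[i,j]}_{\text{leaf}}$, and the gadget variables $\tilde W^{[i,j]}$ to suitable field constants so that the two "branches" of $p_{i,j}$ add up to $(1+u_iu_j)p_{i+1,j-1} + \sum_r v_{i,r,j}\, p_{i,r}p_{r+1,j}$ — here the subtlety is that a single assignment cannot simultaneously make the top-branch and bottom-branch coefficients both equal to $1$ the way $F$ has them, so instead I would keep one copy of the $v$-type variables alive (the paper's notation suggests the $\wbar$ variables are meant to "assume the role of $\vbar$"), i.e. leave enough freedom so that after substitution the $g_r$ gadgets, rather than collapsing, reproduce fresh algebraically independent coefficients playing the role of $v_{i,r,j}$, and choose the constants for $w^{[i,j]}_{\text{top}},w^{[i,j]}_{\text{leaf}}$ to turn on both branches with coefficient $1$. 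Since any multilinear formula for $P$ yields, by substituting constants and renaming, a multilinear formula of no larger size for $F$, the $n^{\Omega(\log n)}$ lower bound transfers. \textbf{The main obstacle} I anticipate is precisely this last point: setting up the partial assignment so that the resulting polynomial is \emph{literally} $F(\ubar,\vbar)$ (or a polynomial known to require large multilinear formulas via \Cref{def:raz-measure}, i.e. full-rank under the relevant partition) while the substitution provably cannot decrease formula size — one must verify that full-rankness of the Raz–Yehudayoff polynomial is inherited, or argue directly via the rank measure that $P$ itself is full-rank under an appropriate equipartition of $\ubar$, which may be the cleanest route and would avoid any delicate matching of coefficients.
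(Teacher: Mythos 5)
Your part 1 matches the paper: induction on $\ell([i,j])$, case split on $w^{[i,j]}_{\text{top}}$, with the addressing gadgets acting as a multiplexer so that exactly one $g_r$ is $1$ on any Boolean input. Correct, and the paper does exactly this.

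Your part 2 has a real gap in the depth bound. You claim the recursion depth is $O(\log n)$ in the interval length because $p_{i,j}$ "splits the interval," but this is not so: the term $p_{i+1,j-1}$ shrinks the interval by only $2$, giving a chain $p_{1,2n} \to p_{2,2n-1} \to \cdots$ of length $n$, and the splits $p_{i,r}\cdot p_{r+1,j}$ range over all $r$, most of which are badly unbalanced. The straightforward DAG therefore has depth $\Omega(n)$. The paper does not argue the depth directly; it cites the multilinear depth-reduction result of Raz–Yehudayoff~\cite{RY08} (which reduces any $\poly(n)$-size multilinear circuit computing a low-degree polynomial to depth $O(\log^2 n)$ with only polynomial blow-up). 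You need to invoke that, or rebalance the recursion yourself — your current argument as stated does not give $O(\log^2 n)$.

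For part 3 you propose, as your primary route, to find a partial assignment of $\wbar$ under which $P$ literally becomes $F$, and you correctly sense the obstacle: $w^{[i,j]}_{\text{top}}$ cannot simultaneously give coefficient $1$ to both branches, and the $g_r$ cannot be made "algebraically independent" since they are all polynomials over the \emph{same} small set $\tilde W^{[i,j]}$ (and on the Boolean cube they sum to at most $1$), so they cannot impersonate the independent $v_{i,r,j}$ variables. That route does not work. The fix is not to recover $F$ at all. The key idea, which you mention only in your last sentence as an alternative, is what the paper actually does (Lemma~\ref{lem:ranklb-mnc1}): for \emph{each} equipartition $\ubar = \y \cup \z$, choose a \emph{different} field assignment to $\wbar$ — if $u_1,u_{2n}$ land in different parts, set $w^{[1,2n]}_{\text{top}}=0$, $w^{[1,2n]}_{\text{leaf}}=1/2$ to get the factor $\tfrac12(1+u_1u_{2n})\cdot p_{2,2n-1}$ and recurse; if they land in the same part, set $w^{[1,2n]}_{\text{top}}=1$ and set the gadget variables to the Boolean address of an $r$ that evenly splits the partition, giving $p_{1,r}\cdot p_{r+1,2n}$, and recurse. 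At each node only \emph{one} branch needs to be active, which sidesteps the coefficient-matching problem entirely. This is an induction on rank per partition, not a reduction to $F$. So your identified "cleanest route" is the correct one and is the paper's proof; your primary route is a dead end for the reasons you already suspected.
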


Before we present the proof for the theorem, we will use it to prove our main theorem~\Cref{thm:main-mnc1-mnc2}, which we recall below.\\

\mainthmmultNC*

\begin{proof}[Proof of~\Cref{thm:main-mnc1-mnc2}]
    Let $N = n + m$, where $n$ is the cardinality of $\ubar$ and $m$ is the cardinality of $\wbar$ and let $\x = \ubar \cup \wbar$. We will define $Q(\x) = 2-P(\x)$, where $P$ is as in~\Cref{thm:mnc1-mnc2-hardpoly}. Clearly $Q(\x)$ is unsatisfiable over the Boolean hypercube. Using~\Cref{lem:mult-main}, we know that the refutation for $Q(\x)$ is $((P(\x))+1)/2$ modulo the Boolean axioms. 
    Thus, the functional refutation of $Q(\x)$ is computable by multilinear circuit of size $\poly(n)$ and depth $O(\log^2 n)$. 

    Moreover, from~\Cref{thm:mnc1-mnc2-hardpoly} we know that $P(\x)$ requires multilinear $\NCo$ circuit of size $n^{\Omega(\log n)}$. As the refutation is $(P(\x) +1)/2$, we also get that the multilinear $\NCo$-$\IPSLINp$ refutation\footnote{As we use functional method, we get a lower bound in $\IPSLINp$ and not just for $\IPSLIN$.} for it must have size $n^{\Omega(\log n)}$. As $N$ and $n$ are polynomially related, this also gives a $N^{\Omega(\log N)}$ lower bound on the proof size of multilinear $\NCo$-$\IPSLIN$ refutations. 
\end{proof}
    
\subsection{Proof of~\Cref{thm:mnc1-mnc2-hardpoly}}

\paragraph{Part 1 of~\Cref{thm:mnc1-mnc2-hardpoly}.} We prove this statement by using the inductive structure of $P(\ubar, \wbar)$. Specifically, we will show that for any interval $[i,j]$, the polynomial corresponding to it, i.e., $p_{i,j}(\ubar, \wbar) \in \{0,1\}$ when evaluated over the Boolean hypercube. We induct on the length of the interval. We only need to consider even length intervals.

\textbf{Base case.} Suppose $\ell([i,j]) =0$ then the statement trivially holds.

\textbf{Inductive step.}
Suppose $\ell([i,j]) >0$ \mnote{and even?}. The polynomial $p_{i,j}$ is as defined in~\Cref{def:mult-nc-hardpoly}. 

Suppose $w^{[i,j]}_{\text{top}} =0$, then \[p_{i,j} = \left((1-w^{[i,j]}_{\text{leaf}}) + w^{[i,j]}_{\text{leaf}} \cdot u_i \cdot u_j \right) \cdot p_{i+1, j-1}.\]
Notice that if $w^{[i,j]}_{\text{leaf}} =0$ then $p_{i,j} = p_{i+1, j-1}$, which by induction hypothesis is Boolean. If $w^{[i,j]}_{\text{leaf}} = 1$ then $p_{i,j} = u_i u_j p_{i+1, j-1}$, which is either $0$ or $1$ for Boolean values of $u_i$, $u_j$ and $p_{i+1, j-1}$.  

On the other hand, if $w^{[i,j]}_{\text{top}} =1$, then 
\[p_{i,j} = \left(\sum_{r \in [i+1, j-1]} g_r(\tilde{W}^{[i,j]}) \times p_{i,r} \cdot p_{r+1, j}\right).\]

Now, suppose the variables in the set $\tilde{W}^{[i,j]}$ are set to $0$s and $1$s such that the boolean assignment equals $r+2^{t_{i,j}}$, then we get $p_{i,j} = p_{i,r} \cdot p_{r+1,j}$. By inductive assumption $p_{i,r} \in \{0,1\} $  and $ p_{r+1,j} \in \{0,1\}$. That finishes the proof. 
\paragraph{Part 2 of~\Cref{thm:mnc1-mnc2-hardpoly}.} Notice that polynomial $F(\ubar, \vbar)$ defined by~\cite{RY08} is very similar to $P(\ubar, \wbar)$. Instead of $\vbar$ variables, we have small local changes using the addressing gadgets. The addressing gadgets themselves are constant-depth (unbounded fan-in) multilinear circuits. It is easy to see that by implementing the inductive definition of $p(\ubar,\vbar)$, we will obtain a polynomial size and polynomial depth multilinear circuit. By using a depth reduction result of~\cite{RY08}, we can obtain a polynomial size and $O(\log^2n)$-depth multilinear circuit for the polynomial.

\paragraph{Part 3 of~\Cref{thm:mnc1-mnc2-hardpoly}.}
Firstly, we will prove that for every partition of variables in $\ubar$ into two sets of equal size say, $\ubar = \y \cup \z$, the rank of the matrix $M_{\y,\z}(p_{1,2n})$ is equal to $2^n$. The bound will then imply  a lower bound for the IPS proof size. This step is quite standard, but we will present it for completeness. 

\begin{lemma}
    \label{lem:ranklb-mnc1}
    Let $n \in \N$  and $p_{1,2n}$ be as defined above. Then, for any partition of $\ubar$ into $\y \cup \z$ each of cardinality $n$,  there exists an assignment to variables in $\wbar$ to field constants, such that rank$\left(M_{\y,\z}(p_{1, 2n})\right) = 2^{n}$. 
\end{lemma}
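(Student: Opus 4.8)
The plan is to mimic the classical rank argument of Raz~\cite{raz-ml-formulavscircuit} and Raz--Yehudayoff~\cite{RY08} for the polynomial $F(\ubar,\vbar)$, transported through the substitution that recovers $F$ from $P$. The key observation is that the gadget variables $\wbar$ have enough freedom that, by an appropriate assignment, the polynomial $p_{1,2n}(\ubar,\wbar)$ collapses exactly to the Raz--Yehudayoff polynomial $f_{1,2n}(\ubar,\vbar)$ with each $v$-variable set to a prescribed field constant. Concretely, for every interval $[i,j]$ of even length I would set $w^{[i,j]}_{\mathrm{top}}$ so as to ``blend'' the two branches of the recursion rather than pick one: since $p_{i,j}$ is affine in $w^{[i,j]}_{\mathrm{top}}$, choosing $w^{[i,j]}_{\mathrm{top}}=\beta$ for a generic constant $\beta$ yields $p_{i,j}=(1-\beta)\,(\text{leaf term})\cdot p_{i+1,j-1}+\beta\sum_r g_r(\tilde W^{[i,j]})\,p_{i,r}p_{r+1,j}$; and choosing $w^{[i,j]}_{\mathrm{leaf}}=1$ makes the leaf term exactly $1+u_iu_j$. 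Finally, assigning the addressing variables $\tilde W^{[i,j]}$ to suitable constants makes $g_r(\tilde W^{[i,j]})$ take, for each $r$, the value we want to play the role of $v_{i,r,j}$. The point is that the gadgets $\{g_r\}_{r\in[i+1,j-1]}$, evaluated at a common point of $\tilde W^{[i,j]}$, realize a vector of values in $\F^{\ell([i+1,j-1])}$; I need to check this map is surjective onto a set rich enough to hit the hard assignment of $\vbar$ used in~\cite{RY08}.

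The steps, in order, would be: (1) recall (or re-derive) from~\cite{RY08} the specific assignment $\vbar \mapsto \gamma(\y,\z)$ — depending only on the chosen partition — under which $M_{\y,\z}(f_{1,2n})$ has full rank $2^n$; in~\cite{RY08} this is where the central limit / rank-amplification machinery lives, and I would cite it as a black box. (2) Show that for each even interval $[i,j]$ there is an assignment of $W^{[i,j]}$ to field constants so that, simultaneously for all $r$, $g_r(\tilde W^{[i,j]})$ equals the target constant $\gamma_{i,r,j}$ while $(1-w^{[i,j]}_{\mathrm{top}})$ and $w^{[i,j]}_{\mathrm{top}}$ are both nonzero and the leaf coefficient equals $1+u_iu_j$; a clean way is to pick $w^{[i,j]}_{\mathrm{top}}=1/2$, $w^{[i,j]}_{\mathrm{leaf}}=1$, and then observe the $g_r$'s are a partition-of-unity-like family (exactly one gadget is ``selected'' by a $0/1$ pattern, but at fractional inputs they interpolate), so that by solving a triangular/linear system in the $t_{i,j}+1$ variables $w^{[i,j]}_t$ one can reach any prescribed value-vector up to the global scalar $w^{[i,j]}_{\mathrm{top}}$ — I would absorb that scalar into the target constants. (3) Conclude that under the combined assignment $\wbar=\omega$, the polynomial $p_{1,2n}(\ubar,\omega)$ equals $c\cdot f_{1,2n}(\ubar,\gamma)$ for a nonzero constant $c$ (the product of the surviving $w^{[i,j]}_{\mathrm{top}}$ and normalization factors), hence $M_{\y,\z}(p_{1,2n}|_{\wbar=\omega}) = c\cdot M_{\y,\z}(f_{1,2n}|_{\vbar=\gamma})$, which has rank $2^n$; since $\y\cup\z$ spans all of $\ubar$, the matrix is $2^n\times 2^n$, so the rank is full.

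The main obstacle I anticipate is step~(2): making the reduction ``$P \leadsto F$'' exact rather than merely ``generic.'' The addressing gadgets $g_r$ were designed to be $0/1$-selectors on Boolean inputs, and their joint image over \emph{all} of $\F^{t_{i,j}+1}$ is a product-structured (Segre-like) variety, not all of $\F^{\ell([i+1,j-1])}$; so it is not automatic that an arbitrary target vector $(\gamma_{i,r,j})_r$ is attainable. The fix I would pursue is to \emph{choose} the hard assignment on the $\vbar$ side to be of the required multiplicative/product form in the first place — inspecting the~\cite{RY08} construction, the full-rank witness can be taken with $v_{i,r,j}$ set to values that factor appropriately (indeed one natural choice sets all ``interval'' $v$-variables so that only the $u_iu_j$-type contributions matter, and the recursion still has full rank) — so that attainability by the gadget family is immediate. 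If that turns out to over-constrain the rank argument, the alternative is to note that full-rankness is a Zariski-open condition in the $\wbar$-assignment and that the $\wbar \mapsto p_{1,2n}$ family is nonconstant in the relevant directions, hence some assignment achieves the generic (full) rank; but I would prefer the explicit construction since the rest of the paper advertises explicitness.
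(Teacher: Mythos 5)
The paper's proof is a direct induction on $n$ (on the interval length), and at each interval it makes a \emph{Boolean} choice of $w^{[i,j]}_{\text{top}}$, so that exactly one branch of the recursion survives: if $u_1,u_{2n}$ fall on opposite sides of the partition, set $w^{[1,2n]}_{\text{top}}=0$, $w^{[1,2n]}_{\text{leaf}}=1/2$ to get $p_{1,2n}=\tfrac12(1+u_1u_{2n})\,p_{2,2n-1}$; if they fall on the same side, set $w^{[1,2n]}_{\text{top}}=1$ and the addressing variables to the binary code of an $r$ that splits $\y$ and $\z$ evenly, to get $p_{1,2n}=p_{1,r}\,p_{r+1,2n}$; then invoke the inductive hypothesis on the shorter intervals. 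This mirrors the structure of the Raz/Raz--Yehudayoff rank argument, but it is carried out \emph{directly} on $P(\ubar,\wbar)$ without ever reconstructing $F(\ubar,\vbar)$.

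Your plan, by contrast, tries to reduce to $F$ by ``blending'' with $w^{[i,j]}_{\text{top}}=1/2$ and then tuning the addressing gadgets to reproduce the $\vbar$-assignment from \cite{RY08} as a black box. The obstacle you flag in step (2) is a genuine gap: the vector $\bigl(g_r(\tilde W^{[i,j]})\bigr)_{r}$ obtained from a single point of $\F^{t_{i,j}+1}$ lives on a low-dimensional product variety, so a generic target $(\gamma_{i,r,j})_r$ is unattainable. Your proposed fix (a) --- choosing the $\vbar$ target to factor appropriately --- is also not quite right as stated: the assignment that makes $M_{\y,\z}(F)$ full rank must depend on the partition (for some partitions the $u_iu_j$ terms alone do not give full rank), so one cannot fix a single ``multiplicative'' $\vbar$-witness up front and then select it for every $\y,\z$. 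The resolution is to abandon the blending/black-box plan altogether and set $w^{[i,j]}_{\text{top}}\in\{0,1\}$, after which the gadget vector you need is just a one-hot selector, trivially in the attainable image; this makes your ``fix'' collapse precisely into the paper's direct induction. Your fix (b) (Zariski genericity) would also suffice, since the paper's assignment already exhibits one full-rank point, but as you note it trades explicitness for no gain here.
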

\begin{proof}   
    We will prove this by induction on $n$. 
    
    \textbf{Base case.} Suppose $n=1$, then $p_{1,2} = \left((1-w^{[1,2]}_{\text{leaf}}) + w^{[1,2]}_{\text{leaf}} \cdot u_1 \cdot u_2 \right) $. By setting $w^{[1,2]}_{\text{leaf}} = 1/2$, we get $p_{1,2} = \frac{1}{2} \cdot (1+u_1\cdot u_2)$ and the statement trivially holds. 

    \textbf{Inductive step.} Let $n>1$. We consider two cases. Either $u_1$ and $u_{2n}$ are in the same part under the partition of $\ubar$ into $\y \cup \z$ or they are in different parts. 

    \underline{$u_1$ and $u_{2n}$ in different parts} We will set $w^{[1,2n]}_{\text{top}} =0$ and $w^{[1,2n]}_{\text{leaf}} = {1}/{2}$. Under this substitution, $p_{1,2n} = \frac{1}{2}\left( 1 + u_1 \cdot u_{2n} \right) \cdot p_{2, 2n-1}.$
    By induction hypothesis, $p_{2, 2n-1}$ is full rank, i.e. $2^{n-1}$, under every equi-sized partition of its variables. And the rank for $(1+u_1u_{2n})$ is $2$. Note also that $p_{2, 2n-1}$ does not use the variables $u_1$ and $u_{2n}$.  Hence, we are done in this case.  

    \underline{$u_1$ and $u_{2n}$ in same part} In this case, there is an $r \in [i+1, j-1]$ such that the intervals $[1,r]$ and $[r+1, 2n]$ evenly split $\y$ and $\z$ variables. We set $w^{[1,2n]}_{\text{top}} =1$ and the variables in the addressing gadget to the binary encoding of $r+2^{t_{i,j}}$. This gives $p_{1,2n} = p_{1,r} \cdot p_{r+1,2n}$. Using induction on $p_{1,r}, p_{r+1,2n}$ and observing that the two polynomials do not share any variables we get the desired bound on the rank of $p_{1,2n}$. 
\end{proof}

\medskip

\printbibliography[
heading=bibintoc,
title={References}
] 

\addtocontents{toc}{\protect\setcounter{tocdepth}{1}}

\appendix
\section{The complexity of refuting lifted subset-sum}
\label{sec:appendix}

Let $F(\x,\z) = \sum_{i<j \in [n]} z_{i,j}x_i x_j -\beta$ be the lifted subset-sum instance, where $\beta \in \Theta(n^3)$. Clearly, it is an unsatisfiable instance. It was used in~\cite{FSTW21} to prove a lower bound on the the size of the multilinear formula $\IPSLINp$. 
Here, we further analyze the hardness of refuting this instance. We show that its refutation must have high complexity under a standard complexity assumption. Specifically, we prove the following.\\

\begin{lemma}
    \label{lem:subset-sum-hard}
    If $F(\x,\z)$ has a polynomial size multilinear $\IPSLINp$ refutation, then $\mathsf{VP = VNP}$.
\end{lemma}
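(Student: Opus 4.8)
The plan is to argue that a polynomial-size multilinear $\IPSLINp$ refutation of $F(\x,\z) = \sum_{i<j} z_{i,j} x_i x_j - \beta$ yields a polynomial-size multilinear circuit (or formula) for a polynomial that is $\mathsf{VNP}$-complete, namely a variant of the clique/permanent-style polynomial, which together with the fact that such polynomials are not in $\mathsf{VP}$ (unless $\mathsf{VP}=\mathsf{VNP}$) gives the conclusion. First I would recall the structure of an $\IPSLINp$ refutation $C$: it is linear in the placeholder variable for $F$, so writing $C = C_0(\x,\z,\bm{w}) + C_1(\x,\z,\bm{w}) \cdot y + (\text{terms in the Boolean-axiom placeholders})$ and substituting, we get that $1 - (\text{Boolean-axiom part})$ equals $F(\x,\z)\cdot C_1(\x,\z,\bm{w})\big|_{\bm{w} = \x^2-\x,\,\bm{w}'=\z^2-\z}$; evaluating over the Boolean cube kills the axiom terms and shows $C_1$, restricted to the cube, agrees with the multiplicative inverse of $F$. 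So it suffices to understand the unique multilinear polynomial $G(\x,\z)$ with $G \cdot F \equiv 1 \pmod{\x^2-\x, \z^2-\z}$ and show it is hard.

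The key step is to identify $G$ explicitly. Over the Boolean cube, $F(\x,\z) = \paren{\sum_{i<j} z_{i,j} x_i x_j} - \beta$ takes the value $S - \beta$ where $S = \abs{\set{(i,j) : i<j,\ x_i = x_j = z_{i,j} = 1}}$ is the number of "present" edges in the graph encoded by $\z$ restricted to the vertex set encoded by $\x$. Since $\beta = \Theta(n^3)$ is chosen larger than $\binom{n}{2}$, the quantity $S - \beta$ is always a nonzero integer, so $G = 1/(S-\beta)$ on the cube; the multilinear representation of $G$ is obtained by multilinearizing the function $(\x,\z) \mapsto 1/(S(\x,\z)-\beta)$. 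I would then extract from $G$ a $\mathsf{VNP}$-hard polynomial: for instance, by reading off an appropriate coefficient or by a suitable substitution/summation over a subset of the variables (using that $\mathsf{VNP}$ is closed under the hypercube sum and that coefficient extraction from a multilinear polynomial is a projection), one recovers the clique polynomial $\mathrm{Clique}_{n,k}(\z) = \sum_{\abs{T}=k} \prod_{i<j\in T} z_{i,j}$ (or the permanent), which is $\mathsf{VNP}$-complete. Concretely, one expects that the monomials of $G$ supported on exactly $k$ of the $\x$-variables encode exactly the $k$-cliques of $\z$, up to known scalar factors coming from the expansion of $1/(S-\beta)$.

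The main obstacle is making the extraction of the $\mathsf{VNP}$-complete polynomial from $G$ rigorous: one must control the rational-function expansion $1/(S-\beta) = -\frac{1}{\beta}\sum_{\ell \ge 0} (S/\beta)^\ell$ after multilinearization, and check that the relevant coefficients of $G$ are nonzero scalar multiples of the clique-polynomial coefficients rather than vanishing or collapsing. I would handle this by choosing $\beta$ generically enough (a transcendental, or a prime with suitable properties in positive characteristic) so that no accidental cancellations occur, and by isolating the lowest-degree-in-$\x$ part of $G$, where the combinatorics of $1/(S-\beta)$ is simplest. Once $G$ (hence $C_1$, hence the refutation) is shown to compute a polynomial that is $\mathsf{VNP}$-complete under p-projections, the assumed polynomial-size multilinear circuit for it places a $\mathsf{VNP}$-complete polynomial in $\mathsf{VP}$ (multilinear circuits are in particular general circuits), forcing $\mathsf{VP} = \mathsf{VNP}$, which completes the proof.
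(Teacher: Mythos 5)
Your overall plan is the right one, and matches the paper in structure: reduce a multilinear $\IPSLINp$ refutation to the unique multilinear functional inverse $G(\x,\z)$ of $F$ on the cube, show that $G$ encodes the clique polynomial $C_{n/2}(\x,\z)$, and conclude via $\mathsf{VNP}$-completeness of $C_{n/2}$. However, the step you yourself flag as ``the main obstacle'' --- pinning down $G$ explicitly and showing the clique coefficients survive --- is a genuine gap, and your proposed workaround is not what makes the argument go through. Choosing $\beta$ ``generically / transcendental'' changes the problem (the instance fixes $\beta$ as a concrete integer $> \binom{n}{2}$), and the power-series sketch for $\ml[1/(S-\beta)]$ is not carried out; you would still have to control the multilinearization of each $(S/\beta)^\ell$ and argue the resulting coefficient on the clique pattern is nonzero.

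The missing ingredient is that no such derivation from scratch is needed: \cite[Proposition B.1]{FSTW21} already gives an \emph{explicit} closed form for the multilinear functional refutation of the \emph{unlifted} subset-sum $f(\z) = \sum_{(i,j)} z_{i,j} - \beta$, namely $g(\z) = \sum_{k=0}^{n} \alpha_k \sum_{|S|=k} \prod_{(i,j)\in S} z_{i,j}$ with each $\alpha_k$ an explicitly nonzero scalar depending only on $k$ and $\beta$. The paper then observes that the lifted instance $F(\x,\z)$ is obtained from $f(\z)$ by the monomial substitution $z_{i,j} \mapsto z_{i,j}x_i x_j$, so its multilinear functional refutation is $G(\x,\z) = \sum_k \alpha_k \sum_{|S|=k} \ml\!\left[\prod_{(i,j)\in S} z_{i,j}x_i x_j\right]$, and finally interpolates out the component of $\z$-degree $\binom{n/2}{2}$ and minimal $\x$-degree, which is exactly $\alpha_{\binom{n/2}{2}}\, C_{n/2}(\x,\z)$. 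So: your reduction to $G$, your identification of the clique polynomial as the target, and your closure-under-interpolation observation are all correct and match the paper; what is missing is to replace the heuristic power-series argument with the known explicit formula for $g(\z)$, which is what actually certifies that the clique coefficient is nonzero without any genericity assumption on $\beta$.
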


This makes our hard instance in~\Cref{thm:main-mnc1-mnc2} interesting. On the one hand we obtain an equally strong lower bound as in~\cite{FSTW21}, and on the other hand we also obtain a reasonably good upper bound on the functional refutation of our instance.

In order to prove the lemma, we start with some notation and preliminaries. Let $V \subseteq [n]$ and let $K_V = \{(i,j) \mid i, j \in V, i <j\}$. Let $e = (i,j)$ denote a pair from the set $K_V$, then we use $z_e$ to denote $z_{i,j}$. 
\begin{lemma}[\cite{Burgisser}]
    \label{lem:clique}
    Let $C_\ell(\x,\z)$ be the Clique polynomial defined as follows. 
    \[C_\ell(\x,\z) = \sum_{V \subseteq [n], |V| = \ell} ~~~\prod_{e \in  K_V} z_e \prod_{i \in V} x_i\]
    $C_{n/2}(\x,\z)$ is $\mathsf{VNP}$ complete\footnote{The Clique polynomial from~\cite{Burgisser} slightly differs from the polynomial we have here. Namely, it is $C_\ell(\z) = \sum_{V \subseteq [n], |V| = \ell}\prod_{e \in  K_V} z_e$. However, by substituting $x_is =1$ in the above polynomial, we can obtain this polynomial.}. 
\end{lemma}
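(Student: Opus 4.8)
The plan is to establish the two halves of completeness separately: membership in $\mathsf{VNP}$ via Valiant's criterion, and $\mathsf{VNP}$-hardness by exhibiting the edge-only clique polynomial as a $p$-projection of $C_{n/2}(\x,\z)$.

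For membership, first observe that $(C_{n/2}(\x,\z))_n$ is a $p$-family: it has $n + \binom{n}{2}$ variables and total degree $n/2 + \binom{n/2}{2} = O(n^2)$. By Valiant's criterion it suffices to show that the coefficient function of $C_{n/2}$ is computable in time $\poly(n)$ (equivalently, lies in $\#\mathsf{P}/\poly$). Since the polynomial is multilinear, only monomials of the form $\prod_{i \in V} x_i \cdot \prod_{e \in E} z_e$ with $V \subseteq [n]$ and $E \subseteq \binom{[n]}{2}$ can have nonzero coefficient, and by the definition in~\Cref{lem:clique} the coefficient of such a monomial is $1$ precisely when $|V| = n/2$ and $E = K_V$ (i.e. $E$ is exactly the set of pairs lying inside $V$), and $0$ otherwise. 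This predicate is decidable in $\poly(n)$ time, so $C_{n/2}(\x,\z) \in \mathsf{VNP}$.

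For hardness, recall from~\cite{Burgisser} that the clique polynomial on the edge variables alone, $C_{n/2}(\z) = \sum_{V \subseteq [n],\, |V| = n/2}\prod_{e \in K_V} z_e$, is $\mathsf{VNP}$-complete, and in particular $\mathsf{VNP}$-hard under $p$-projections. Setting $x_i \mapsto 1$ for every $i \in [n]$ in $C_{n/2}(\x,\z)$ gives
\[
\sum_{V \subseteq [n],\, |V| = n/2} \prod_{e \in K_V} z_e \cdot \prod_{i \in V} 1 \;=\; C_{n/2}(\z),
\]
so $C_{n/2}(\z)$ is a $p$-projection of $C_{n/2}(\x,\z)$. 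Because $p$-projections compose, every family in $\mathsf{VNP}$ — being a $p$-projection of $(C_{n/2}(\z))_n$ — is also a $p$-projection of $(C_{n/2}(\x,\z))_n$. Hence $C_{n/2}(\x,\z)$ is $\mathsf{VNP}$-hard, and together with the membership argument it is $\mathsf{VNP}$-complete.

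There is no genuine obstacle here; the only points requiring care are orienting the reduction correctly — we must obtain the \emph{already-hard} polynomial $C_{n/2}(\z)$ from ours, not the reverse — and noting that hardness is transported along a projection only because $p$-projections compose, which in turn relies on all the families in sight being $p$-families so that the notions of $\mathsf{VNP}$ and $\mathsf{VNP}$-completeness are meaningful.
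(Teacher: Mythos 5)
Your proposal is correct and follows the same idea the paper relies on: the paper cites~\cite{Burgisser} and, in the footnote, notes that substituting $x_i \mapsto 1$ recovers B\"urgisser's edge-only clique polynomial, which is exactly your hardness reduction. You additionally spell out the routine $\mathsf{VNP}$ membership via Valiant's criterion, which the paper leaves implicit; both steps are sound.
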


We are now ready to prove~\Cref{lem:subset-sum-hard}. 
\begin{proof}[Proof of~\Cref{lem:subset-sum-hard}]
We introduce some notation. We use $\nCtwo$ to denote the set $\{(i,j) \mid i, j \in [n], i < j\}$. 
Let the subset-sum instance (without the lift) be \[f(\z) = \sum_{ (i,j) \in \nCtwo} z_{i,j} - \beta = 0\] for $\beta>n^2$ 

. (This instance is the same as the subset-sum instance in \cite[Section 5]{FSTW21} up to relabeling.) In \cite[Proposition B.1]{FSTW21}, they gave an explicit description of its multilinear functional refutation, i.e. they exactly computed the multilinear polynomial $g(\z)$ such that 
\begin{align*}
    g(\z)  = \dfrac{1}{\left(\sum_{ (i,j)\in \nCtwo} z_{i,j} - \beta \right)}, && \text{ for every } \mathbf{z} \in \Boo^{n}.
\end{align*}

They showed that every functional refutation of $f(\mathbf{z})$ can be expressed as a linear combination of the elementary symmetric polynomials of degree $k$, for every $k \in [n]$. More precisely, they showed that

\begin{equation}
\label{eq:fstw-refutation}
    g(\z) = \sum_{k=0}^n \alpha_k ~\cdot \sum_{\substack{S \subseteq \nCtwo \\ |S|=k} } \; \; \prod_{(i,j) \in S} z_{i,j},
\end{equation}
where $\alpha_k$ is a non-zero constant that only depends on $k$ and $\beta$. 

Now, we will first change the input instance to $F(\x,\z) = \sum_{(i,j)} z_{i,j} x_i x_j$, where the sum is over the set $\nCtwo$. As $F(\x,\z)$ can be obtained from $f(\z)$ by a monomial substitution $z_{i,j} \mapsto z_{i,j} x_i x_j$, it is easy to see that the functional refutation of $F(\x,\z)$ can be obtained from the refutation of $f(\z)$ by monomial substitution.\anote{This part is a bit unclear to me... Also we need to mention $F(\mathbf{x},\mathbf{z})$ is unsatisfiable} This is because we only need to preserve the refutation over the Boolean cube. \mnote{maybe more formally: $\ml[f]  = f \mod \mathbf{x}^2-\mathbf{x}$} Such a monomial substitution can result in a non-multilinear polynomial. Let $\ml[\cdot ]$ denote the following map defined for monomials over a set of variables, say $\y$: $\ml \left[\prod_i y_i^{a_i}\right] = \prod_{i} y_i^{\text{min}\{a_i, 1\}}$. The map extends linearly and can be defined as a map from $\F[\y] \rightarrow \F[\y]$ for any polynomial ring $\F[\y]$. 

Let $G(\x,\z)$ denote the unique multilinear \anote{we should say multilinear refutation to say it is unique} refutation of $F(\x,\z)$. Then, using~\Cref{eq:fstw-refutation} we get
\[G(\x,\z) = \sum_{k=0}^n \alpha_k ~\cdot \sum_{S \subseteq \nCtwo, |S|=k} \ml \left[\prod_{(i,j) \in S} z_{i,j} x_i x_j \right]\]

Suppose we assume that $F(\x,\z)$ has a polynomial-size $\IPSLINp$ refutation. This implies that there is a multilinear circuit of polynomial size computing $G(\x,\z)$. We further isolate the degree ${n/2 \choose 2}$ component in $\z$ variables by interpolating it out and further degree $n$ component in $\x$ variables by another interpolation. 
It is easy to see that the polynomial this computes equals $C_{n/2}(\x,\z)$ (up to scaling by a coefficient). Assuming $\mathsf{VP} \neq \mathsf{VNP}$, this gives a contradiction. 
\end{proof}

\end{document}